\colorlet{fxnote}{blue}
\colorlet{fxnotebg}{yellow!75}
\newcommand{\OPT}{\operatorname{\text{\textsc{opt}}}}
\newcommand{\Vol}{\operatorname{\text{\textsc{vol}}}}
\newcommand{\Val}{\operatorname{\text{\textsc{costs}}}}
\newcommand{\Sol}{\operatorname{\text{\textsc{sol}}}}
\newcommand{\costs}{\operatorname{\text{\textsc{costs}}}}
\newcommand{\ITEMS}{\operatorname{\text{ITEMS}}}
\newcommand{\alg}[1]{\mathsf{#1}}
\newcommand{\shortrnote}[1]{ &  & \text{\footnotesize\llap{// #1}}}
\theoremstyle{plain}
\newtheorem{assumption}{Assumption}
\title{Robust Online Algorithms for Dynamic Problems} %TODO Please add
\author{\ }{\ }{}{}{}
\authorrunning{\ }
\author{Sebastian Berndt}{Department of Computer Science, Kiel University,
  Germany}{seb@informatik.uni-kiel.de}{}{Supported
 by DFG Project, "Robuste Online-Algorithmen für Scheduling- und Packungsprobleme", JA 612 /19-1}
\author{Valentin Dreismann}{University of Warwick, United Kingdom}{valentin.dreismann@posteo.net}{}{}
\author{Kilian Grage}{Department of Computer Science, Kiel University,
  Germany}{kig@informatik.uni-kiel.de}{}{Supported by by GIF-Project "Polynomial Migration for Online Scheduling".}
\author{Klaus Jansen}{Department of Computer Science, Kiel University,
  Germany}{kj@informatik.uni-kiel.de}{}{}
\author{Ingmar Knof}{Kiel University,
  Germany}{ingmar.knof@posteo.de}{}{}
\authorrunning{S.~Berndt and V.~Dreismann and K.~Grage and K.~Jansen and I.~Knof}%TODO mandatory. First: Use abbreviated first/middle names. Second (only in severe cases): Use first author plus 'et al.'
\keywords{online algorithms, dynamic algorithms, competitive ratio, recourse,
  packing problems}%TODO mandatory; please add comma-separated list of keywords
\begin{document}

\maketitle

%TODO mandatory: add short abstract of the document
\begin{abstract}
  Online algorithms that allow a small amount of migration or recourse have been
  intensively studied in the last years. 
  They are essential in the design of competitive algorithms for dynamic
  problems, where objects can also depart from the instance. 
  In this work, we give a general framework to obtain so called robust online
  algorithms for these dynamic problems: 
  these online algorithm achieve an asymptotic competitive ratio of
  $\gamma+\epsilon$ with migration $O(1/\epsilon)$, where $\gamma$ is the best
  known offline asymptotic approximation ratio.
  In order to use our framework, one only needs to construct a suitable online
  algorithm for the \emph{static} online case, where items never depart.
  To show the usefulness of our approach, we improve upon the best known robust
  algorithms for the dynamic versions of generalizations of Strip Packing and
  Bin Packing, including the first robust algorithm for general $d$-dimensional
  Bin Packing and Vector Packing. 
\end{abstract}

\section{Introduction}
Online algorithms are a very natural way to deal with uncertain inputs and the
need for a sequence of good solutions throughout the evolution of an instance.
For a surprisingly large number of problems, one can obtain online algorithms
that produce solutions within a constant ratio of the optimal solutions. 
The worst-case ratio between an optimal solution and a solution produced by the
online algorithm is called the \emph{competitiveness} of the algorithm.
For most algorithms with constant competitiveness, the algorithms rely heavily
on the fact that the instances involve in a \emph{monotone} way: 
every object that becomes part of the instance will stay part of the instance
forever. 
This monotonicity property is often not present in real-world applications,
where the objects might be removed from the instance later on. 
These objects might be used at a different place, they might expire, or they are
no longer relevant. 
Hence, in order to still give a performance guarantee, the online algorithms
need to be able to modify parts of an existing solutions. 
Clearly, such a modification might be costly and should thus be minimized.
If no such boundary on the modification is given, one could easily use an
offline approximation algorithm. 
A natural way to measure the amount of modification needed is the
\emph{migration factor}: 
it compares the total size of modified objects with the size of the newly
inserted or removed object.
An algorithm with a bounded migration factor roughly translates to the fact that
the insertion or departure of a small object can only lead to small changes in
the structure of the current solution. 
On the other hand, if a large (and thus impactful) object is inserted or
removed, we are allowed to modify a larger part of the solution. 

It is easy to see that online algorithms can in general not achieve the same
solution quality as offline algorithms. 
The question how much these two values differ is one of the central questions in
the field of online and approximation algorithms. 
Clearly, an online algorithm that is allowed a certain amount of migration is
able to gap between these extremes.
If the amount of migration needed directly corresponds to the improvement of the
solution guarantee, we call such an algorithm \emph{robust}.
As an example, consider the well-studied Bin Packing problem.
No online algorithm (without migration) can achieve a competitiveness smaller
than $3/2$~\cite{yao1980new}, while an offline algorithm with asymptotic
approximation ratio of $1+\epsilon$ can be obtained in polynomial
time~\cite{karmarkar1982efficient}.
A robust online algorithm would then have competitiveness $1+\epsilon$ and
migration factor $f(1/\epsilon)$ for some function.
In other words, we only need to increase the migration if we want to obtain a
better solution.
Such robust online algorithms thus have a continuous behavior between the
performance of the best offline algorithm ($\epsilon \to 0$) and the performance
of the best online algorithm ($\epsilon \to \infty$). 

In this paper, we describe a general framework to construct robust online
algorithms for dynamic online problems. 
The only ingredients needed for our framework are (i) a good offline
approximation algorithm and a (ii) suitably designed online algorithm for the
\emph{static} case. 
Our framework is applicable to many problems, especially in the field of
geometric packing problems.
To show the versatility of our approach, we use it to improve upon existing
robust algorithms and to construct new robust algorithms.

\subsection{Related Works}
The migration factor model was introduced by Sanders, Sivadasan and
Skutella~\cite{DBLP:journals/mor/SandersSS09}. They studied the classical
problem of minimizing the makespan on parallel identical machines and made use
of sensitivity results of integer programming to obtain a robust
$1+\epsilon$-competitive algorithm. This
work spanned a lot of follow up works for a wide range of different problems:
\begin{itemize}
\item Skutella and Verschae~\cite{DBLP:journals/mor/SkutellaV16} also studied
  the same problem and were able to obtain a robust $1+\epsilon$-competitive
  algorithm for the dynamic case, where jobs may depart. 
  They also studied the dual version of makespan minimization problem, where the
  minimum load should be maximized. 
  This problem (called Machine Covering or Santa Claus) also admits a robust
  $1+\epsilon$-competitive algorithm even in the dynamic case. 
  Skutella and Verschae also proved that for all $\epsilon > 0$ there is no
  online algorithm for the Machine Covering problem that achieves competitive
  ratio $20/19-\epsilon$ with worst-case migration $f(1/\epsilon)$ for any
  function $f$. 
  G{\'{a}}lvez \emph{et al.}~\cite{DBLP:conf/esa/GalvezSV18} gave two simple
  competitive robust algorithms (one that is $1{.}7+\epsilon$-competitive and
  one that is $4/3+\epsilon$-competitive) with polynomial migration factor for
  the static version of the problem of makespan minimization. 
\item For the Bin Packing problem, Epstein and
  Levin~\cite{DBLP:journals/mp/EpsteinL09} gave the first robust
  $1+\epsilon$-competitive algorithm for Bin Packing based on the same
  sensitivity results for integer programming.
  Jansen and Klein~\cite{DBLP:conf/icalp/JansenK13} designed new techniques in
  order to obtain a migration factor polynomial in $1/\epsilon$. 
  These techniques were improved by Berndt \emph{et
    al.}~\cite{DBLP:conf/approx/BerndtJK15} to also handle the dynamic version
  of the Bin Packing problem where items can also depart. 
  In all of these works, the migration factor is worst-case and can thus not be
  saved up for later use. 
  In contrast, Feldkord \emph{et al.}~\cite{DBLP:conf/icalp/FeldkordFGGKRW18}
  presented a simple robust $1+\epsilon$-competitive algorithm with amortized
  migration factor that also works for the dynamic case. 
  Berndt \emph{et al.}~also showed that a worst-case migration factor of
  $\Omega(1/\epsilon)$ is needed~\cite{DBLP:conf/approx/BerndtJK15}. 
  This lower bound was shown to also hold for amortized migration by Feldkord
  \emph{et al.}~\cite{DBLP:conf/icalp/FeldkordFGGKRW18}. 
  Feldkord \emph{et al.}~also studied a problem variant, where the migration
  needed for an item does not correspond to its size. 
\item For the makespan problem, where jobs can be scheduled preemptively
  (i.\,e.~they can be split, but parts are not allowed to run simultaneously),
  Epstein and Levin~\cite{DBLP:journals/algorithmica/EpsteinL14} designed an
  optimal online algorithm with migration factor $1-1/m$, where $m$ is the
  number of machines. 
  Again, the migration factor used here is worst-case. 
  They also showed that exact algorithms for the makespan minimization problem
  on uniform machines and for identical machines in the restricted assignment
  have worst-case migration factor at least $\Omega(m)$. 
\end{itemize}

The offline variants of the geometric packing problems have also been studied
intensively. See e.\,g.~the survey of Christensen \emph{et
  al.}~\cite{christensen2017survey} and the references therein. 

\subsection{Our Results}
We present a very general framework that allows to construct robust online
algorithms that have optimal amortized migration factor of $O(1/\epsilon)$ and
achieve the same competitive ratio as the best known offline approximation
algorithm (up to an additive error of $\epsilon$). 
This framework is suitable for many different optimization problems, especially
geometric packing problems. 
The algorithms created by our framework can all deal with the dynamic versions
of these problems, where items can also depart from the instance. 
To present a general framework, we introduce the notion of \emph{flexible online
  algorithms} and show that such algorithms can be combined with offline
approximation algorithms under certain circumstances.
We then show the versatility of our approach by looking at several well-studied
problems including generalizations of Strip Packing and Bin Packing. 
We give robust algorithms for the multi-dimensional variants of these problems,
where we can also handle both the departure and the rotation of items. 
This improves and generalizes several known results and gives the first robust
algorithms for all of the other problems. 
Additionally, our compact and clean framework gives much easier algorithms
compared with the previously known algorithms.

\section{Online Algorithms for Dynamic Problems: A Framework}

\subsection{Preliminaries}
While the techniques presented in this paper also work for maximization
problems, we will focus on minimization problems to improve the accessibility of
our results.

\begin{definition}
  Let $\ITEMS\subseteq \{0,1\}^{*}$ be a prefix-free set describing some items. A \emph{minimization problem} $\Pi=(\mathcal{I},\Sol,\Val)$ consist of a set
  of \emph{instances} $\mathcal{I}\subseteq \ITEMS$, a mapping $\Sol$ that maps an
  instance $I\in \mathcal{I}$ to a non-empty set of \emph{feasible solution}
  $\Sol(I)$, and a mapping $\Val$ that maps a solution $S\in \Sol(I)$ to its \emph{costs}
  $\Val(S)\in \mathbb{Q}_{\geq 0}$. A solution $S^{*}\in \Sol(I)$ is called
  \emph{optimal}, if $\Val(S^{*})\leq \Val(S)$ for all $S\in \Sol(I)$. We denote
  the cost of any optimal solution $S^{*}$ by $\OPT(I):= \Val(S^{*})$. 
\end{definition}
For the sake of simplicity, we also sometimes treat $\Pi$ and $\mathcal{I}$
interchangeably and also write $I\in \Pi$ to denote that $I$ is an instance. 

Throughout this paper, we make the following two natural assumptions that hold
for all problems considered in this work:
\begin{assumption}\
  
  \label{assumption}
  \begin{compactenum}
  \item For every instance $I$, every instance $I'\subseteq I$,
and every solution $S\in \Sol(I)$, the solution $S\restriction_{I'}$ induced by
the items in $I'$ is also feasible, i.\,e.~$S\restriction_{I'}\in \Sol(I')$.
Hence, removing some items from a feasible solution results in a feasible
solution of the remaining items.
\item For every instance $I$, and every instance $I'\subseteq I$, we have
  $\OPT(I')\leq \OPT(I)$. Hence, removing items from an instance can only
  decrease the optimum. 
  \end{compactenum}
\end{assumption}

In the classical \emph{offline version} of $\Pi=(\mathcal{I}, \Sol,\Val)$, we
are given the complete instance $I\in \mathcal{I}$ all in once. 
An \emph{$\alpha$-approximation} for $\Pi$ is a polynomial-time algorithm $\alg{A}$ such
that for all instances $I\in \mathcal{I}$, we have $\alg{A}(I)\leq \alpha
\OPT(I)+c$ for some constant $c$ not depending on the instance. 
Here, $\alg{A}(I)$ is the value of the solution produced by $\alg{A}$.

In the \emph{online version} of a minimization
problem $\Pi=(\mathcal{I}, \Sol, \Val)$, an \emph{online instance} $I$ is a sequence of
instances $I_{1},I_{2},\ldots,I_{|I|}\in \mathcal{I}$ such that $|I_{t}\triangle
I_{t+1}|=1$ for all $t=1,\ldots,|I|-1$, where $\triangle$ is the symmetric
difference of two sets. This means that we \emph{insert} a new item $i^{*}$ (if
$I_{t+1}\setminus I_{t}=\{i^{*}\}$) or an item $i^{*}$ \emph{departs} (if
$I_{t}\setminus I_{t+1}=\{i^{*}\}$). The set of online instances of $\Pi$ is
denoted as $\mathcal{I}^{\text{on}}$. 
An \emph{online algorithm} $\alg{A}$
maintains a sequence of solutions $S_{1},\ldots,S_{t}$, where $S_{t}\in
\Sol(I_{t})$ and furthermore, $S_{t+1}(i)=S_{t}(i)$ for all $i\in I_{t+1}\cap
I_{t}$.  In the \emph{static online
  version}, we only have insertions and thus $I_{t}\subsetneq I_{t+1}$. A non-static problem is called \emph{dynamic}. 
We say that an online algorithm $\alg{A}$ producing such a sequence of solutions $S_{1},S_{2},\ldots$ is
 \emph{$\beta$-competitive}, if
\begin{align*}
  \Val(S_{t})\leq \beta\cdot \OPT(I_{t})+c
\end{align*}
for all $t=1,\ldots,|I|$ and some constant $c$ not depending on the instance.
This notion of competitiveness is sometimes also called \emph{asymptotic
  competitiveness} in contrast to the notion of \emph{absolute competitiveness},
where no additional additive term $c$ is allowed. 
Whenever we talk about competitiveness, this is with regard to the notion of
asymptotic competitiveness. 

\paragraph*{Migration}
Achieving bounded competitiveness is usually impossible for dynamic problems,
even for very simple problems such as Bin Packing where every item has the same
size. 
This is simply due to the fact that an online algorithm is not allowed to change
the placement of already placed items. 
This very strict restriction thus comes with a high cost with regard to the
competitiveness. 
As many real-world applications are not only static, but allow the departure of
items, one must thus me more flexible. 
We will thus allow a small amount of repacking to be able to handle dynamic
problems. 
The model of repacking we use is called the \emph{amortized migration factor
  model}. 
In this setting, every item $i\in \ITEMS$ comes with a \emph{size} $v_{i}\in
\mathbb{Q}_{\geq 0}$.
Usually this size will be the space needed for an item, like it's total area or
volume or some identical criteria like the side length of a hypercube.
For a set of items $I\subseteq \ITEMS$, we denote by $\Vol(I)=\sum_{i\in
  I}v_{i}$ the complete volume of $I$. 
If $I\in \mathcal{I}^{\mathrm{on}}$ is an online instance and
$\vec{S}=(S_{1},\ldots,S_{|I|})$ is a sequence of solutions with $S_{j}\in
\Sol(I_{j})$, the \emph{migrated items} $M_{t}(\vec{S})$ at time $t$ are defined
as $M_{t}(\vec{S})=\{i\in I_{t-1}\cap I_{t}\mid S_{t-1}(i)\neq S_{t}(i)\}$. 
The total \emph{migration} $\mu(\vec{S},t)$ used until time $t$ is defined as
$\mu(\vec{S},t):=\sum_{j=1}^{t}\sum_{i\in M_{j}(\vec{S})}v_{i}$, i.\,e.~as the
sum of the sizes of the migrated items. 
Inserting an item $i$ builds up a migration potential of $v_{i}$ and migrating
$i$ has cost $v_{i}$. 
More formally, for $t=1,\ldots,|I|$, let $A_{t}=\bigcup_{j\in
  \{0,1,\ldots,t-1\}}(I_{j+1}\setminus I_{j})$ be the set of items that were
inserted until time $t$ and $D_{t}=\bigcup_{j\in
  \{0,1,\ldots,t-1\}}(I_{j}\setminus I_{j+1})$ be the set of items that departed
until time $t$. 
Let $\alg{A}$ be an online algorithm that is allowed to migrate items. 
We say that $\alg{A}$ has \emph{migration factor} $\beta$, if
\begin{align*}
  \mu(\vec{S},t)\leq \beta [\Vol(A_{t})+\Vol(D_{t})].
\end{align*}
for all $t=1,\ldots,|I|$ and all $I\in \mathcal{I}^{\mathrm{on}}$. 
Here $\vec{S}$ is the sequence of solutions produced by $\alg{A}$. 
Note that our definition of migration is amortized, i.\,e.~we can build a
potential to use later on.
This will essentially allows us to repack the complete instance from time to
time.
In contrast, in the notion of worst-case migration, the total size of all
repacked items is at most $\beta\cdot v_{t}$ at each time $t$, i.\,e.~one cannot
save up migration for later use.

If $\alg{A}$ is $(\gamma+\epsilon)$-competitive for some constant $\gamma$ and
all $\epsilon > 0$, and additionally has migration factor bounded by
$f(1/\epsilon)$ for some function $f$, we say that $\alg{A}$ is \emph{robust}. 
This is due to the fact that the migration needed only depends on the desired
quality of the solution.

\subsection{The General Framework}

Before we start looking at the specific problems, we will introduce our very
general framework. The simple algorithm presented by Feldkord \emph{et
  al.}~\cite{DBLP:conf/icalp/FeldkordFGGKRW18} can
be seen as a special case of our framework. 
By using the concept of amortized migration, we can save up repacking potential,
to be used at a later time. 
Now the design of an algorithm boils down to three basic questions. 
How do we pack arriving items? 
How do we repack arriving items? 
And at what time do we repack items? 
The main idea behind our framework for packing problems is to use general
algorithms for the first two problems: 
an online algorithm to pack arriving items and an offline algorithm for the
repacking. 
The third problem is then solved by a generic combination of the two algorithms. 
In order for this approach to work, we need different criteria for both
algorithms and the packing problem that we are trying to solve. 
A simpler version of this framework was also used Feldkord \emph{et
  al.}~\cite{DBLP:conf/icalp/FeldkordFGGKRW18} for the Bin Packing problem.

\begin{definition}
Let $\Pi$ be an minimization problem with sizes $v_{i}$. We call
$\Pi$ \emph{space related}, if $\OPT(I) \ge \Vol(I)$ instances
$I\in\Pi$. Here $\Vol(I) := \sum_{i\in I}{v_i}$ is the total size of $I$. 
\end{definition}

Intuitively, this definition captures the fact that $\Pi$ is a packing problem
that needs to pack items of a certain volume in some non-overlapping way. All
problems in this paper will be space related. To make use of this relation, we
also need online algorithms such that competitiveness not only
holds for the optimum $\OPT(I)$, but also for the volume $\Vol(I)$. 
We therefore also formally introduce this necessity. 

\begin{definition}
  Let $\Pi$ be an online minimization problem with sizes $v_{i}$ and
  $\alg{A}$ be an online algorithm for $\Pi$. We say that $\alg{A}$ is
  \emph{space related with ratio $\beta$}, if
    \begin{align*}
    \Val(S_{t})\leq \beta\cdot \Vol(I_{t})+c
    \end{align*}
  for all online instances $I\in \mathcal{I}^{\text{on}}$ and all time points $t=1,2,\ldots,|I|$. Here,
  $S_{t}$ denotes the solution produced by $\alg{A}$
  at time~$t$.
\end{definition}

Trivially, a space related algorithm with ratio $\beta$ for a space related
problem implies $\beta$-competitiveness, like Next-Fit for the Bin Packing
problem. As indicated above, we will combine such an online algorithm with
another offline algorithm. To be able to efficiently combine these two
algorithms the online algorithms need to be able to build flexibly on top of the
solution of the offline algorithm.

More formally, we require that our online
algorithm takes another optional argument $S\in \Sol(I_{t})$ describing an
existing solution to the previous instance $I_{t}$ to build upon.
\begin{definition}
  Let $\Pi$ be an online minimization problem with sizes $v_{i}$ and $\alg{A}$
  be a space related online algorithm for $\Pi$ with ratio $\beta$. Furthermore
  let $I\in \mathcal{I}^{\text{on}}$ be a static online instance (i.\,e.~it
  contains no departures),  $t < t'\leq |I|$ two time points, and $S$ be
  a solution of $I_{t}$. We say that $\alg{A}$ is \emph{flexible}, if it also
  accepts $S$ as another parameter and produces upon input $I_{t'}$ and $S$ a
  solution $S'$ with $S'(i)=S(i)$ for all $i\in I_{t}\cap I_{t'}$. 
\end{definition}

Note that we define flexibility only with regard to static online instances
where no items depart. One advantage of our framework is that we only need to
design such online algorithms, but our combined algorithm will also be able to
deal with departures. Remember that our online algorithm $\alg{A}$ is given some
solution $S$ to the previous items $I_{t}$. In order to be able to ignore the departure
of items in the combined approach, we need to guarantee that $\alg{A}$ only
introduces an error of $\beta[\Vol(I_{t'})-\Vol(I_{t})]$ when it packs instance
$I_{t'}$. 

\begin{definition}
    We say that $\alg{A}$ is \emph{flexible with ratio $\beta$}, if $\alg{A}$ is
    flexible and 
    \begin{align*}
      \alg{A}(I_{t'},S)\leq \costs(S)+\beta[\Vol(I_{t'})-\Vol(I_{t})]+c
    \end{align*}
    for some constant $c$ for all instances $I\in \mathcal{I}^{\mathrm{on}}$,
    all $t\in \{1,\ldots,|I|\}$, and all $t'\in \{t+1,\ldots,|I|\}$. 
  \end{definition}

For the problems we will address later, Bin Packing and Strip Packing,
we can simply solve the instance containing the newly arriving jobs separately
and pack the new partial solution on top of the old one. Therefore this will be
a property easily fulfilled by all online algorithms that we consider later. Note that a flexible algorithm with ratio $\beta$ is also space related with
  ratio $\beta$, as we can simply choose $t=1$ and a trivial solution $S$ for
  this instance with a single item. 

\subsection*{Combining the Algorithms}
In order to obtain a robust PTAS or a robust online algorithm that is $\gamma+\epsilon$-competitive, we will need a flexible online algorithm with a
constant ratio. For the offline algorithm we will need an offline
$\gamma+\epsilon$-approximation (in the case of $\gamma=1$, this is simply a PTAS).
Basically, our final algorithm will have the same ratios as the offline
algorithm and migration factor $O(1/\epsilon)$. In order to achieve a bounded
competitiveness, we need to migrate items at some special time points. These
time points will be determined by the total volume of items that arrived or
departed since the last such time point. At these special time points, we will
use the offline algorithm to rebuild the solution completely. In between these
points, we will only apply the online algorithm for the static case. We will
therefore define phases such that during a phase we only apply the online
algorithm.

\begin{definition}
  \label{def:phases}
Let $\Pi$ be an online minimization packing problem with sizes $v_{i}$ and
$I\in \mathcal{I}^{\mathrm{on}}$ be an online instance. 

We partition $I_{1},I_{2},\ldots$ into \emph{phases} as follows: The start time  of the first phase is
$1$. If $\tau$ is the start time of the current phase, and $t\geq
\tau$ is some time point, we define the following values: (i) the complete
volume of the instance at time $\tau$ is denoted by
$V_{\tau}=\Vol(I_{\tau})$, (ii) the items inserted since $\tau$ are defined
as $\text{Ins}_{t}=\bigcup_{i=\tau}^{t-1}(I_{i+1}\setminus I_{i})$ and its
volume is $A_{t}=\Vol(\text{Ins}_{t})$, and (iii) the items departed since
$\tau$ are defined as $\text{Dep}_{t}=\bigcup_{i=\tau}^{t-1}(I_{i}\setminus
I_{i+1})$ and its total volume is $R_{t}=\Vol(\text{Dep}_{t})$. The current
phase ends at 
the earliest point of time $\tau'>\tau$ such that $A_{\tau'}+R_{\tau'}>\epsilon
V_{\tau}$. The next phase then starts at time $\tau'$. 
\end{definition}

Basically we end a phase when the total size of items that were added or removed
exceeds an $\epsilon$ factor of the total item size at the beginning of the
phase. 
So our final algorithm basically packs inserted items using the online algorithm
until a phase ends. 
Departed items stay at the same position. 
Whenever a phase ends, we compute an offline solution for the current instance
and repack our solution completely. 
After that the next phase starts, where we again only pack with the online
algorithm.
Now we will prove that this combination of algorithms will be a robust
$\gamma+\epsilon$ algorithm for $\gamma\in O(1)$ and fixed $\epsilon >0$, when
we combine two algorithms with the right properties.

\begin{lstlisting}[caption={Framework $\alg{ALG}(\alg{A}_{\mathrm{on}}, \alg{A}_{\mathrm{off}})$},label=list:8-6,captionpos=t,float,abovecaptionskip=-\medskipamount, escapeinside={(*}{*)}]
let (*$V_{\mathrm{total}} :=0$*);
let (*$V_{\mathrm{changed}} := 0$*);
let (*$S$*) be an empty solution;
for each time (*$t$*) and arriving or departing item (*$i$*)
  if item (*$i$*) arrives do
    pack (*$i$*) according to online algorithm (*$\alg{A}_{\mathrm{on}}$*) and solution (*$S$*);
  endif
  if item (*$i$*) departs do
    leave (*$i$*) in its position in the current solution (*$S$*);
  endif
  (*$V_{\mathrm{changed}} = V_{\mathrm{changed}} + v_i$*);
  if (*$V_{\mathrm{changed}} > \epsilon V_{\mathrm{total}}$*) do
    Compute offline solution (*$S$*) for (*$I_t$*) with (*$\alg{A}_{\mathrm{off}}$*);
    Continue with new solution;
    Set (*$V_{\mathrm{total}} = \Vol(I_t) $*);
    Set (*$V_{\mathrm{changed}} = 0$*);
  endif
\end{lstlisting}

\begin{theorem}
  \label{thm:main}
  Let $\Pi$ be a minimization problem fulfilling \autoref{assumption}. 
Furthermore let $\alg{A}_{\mathrm{off}}$ be a $(\gamma+\epsilon)$-approximation algorithm for the offline
version of $\Pi$ for some constant $\gamma\in O(1)$, let $1/2 \geq \epsilon >0$,
and let $\alg{A}_{\mathrm{on}}$ be a flexible algorithm with ratio $\beta$. Then the
combination of these two algorithms, denoted with
$\alg{ALG}(\alg{A}_{\mathrm{on}}, \alg{A}_{\mathrm{off}})$, is an $(\gamma
+O(1)\beta \epsilon )$-competitive robust algorithm for $\Pi$ with amortized migration
factor $O(\frac{1}{\epsilon})$.

The running time of $\alg{ALG}$ at time point $t$ is at most
$T_{\mathrm{off}}(t)+T_{\mathrm{on}}(t)$, where $T_{\mathrm{off}}(t)$
(resp.~$T_{\mathrm{on}}(t)$) is the worst-case running time of
$\alg{A}_{\mathrm{off}}$ (resp.~$\alg{A}_{\mathrm{on}}$) on an instance of $t$
items. 
\end{theorem}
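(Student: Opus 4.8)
The plan is to analyze the algorithm phase by phase, showing that within each phase both the competitive ratio and the migration bound hold. First I would fix an arbitrary time $t$ and let $\tau \le t$ be the start of the phase containing $t$. At time $\tau$ we ran the offline algorithm on $I_\tau$, so $\costs(S_\tau) \le (\gamma+\epsilon)\OPT(I_\tau) + c_{\mathrm{off}}$. Between $\tau$ and $t$ the combined algorithm only inserts items via $\alg{A}_{\mathrm{on}}$ (departures leave the solution untouched), so I can treat the subsequence $I_\tau, I_{\tau+1}, \ldots, I_t$ restricted to insertions as a static online instance fed to the flexible algorithm $\alg{A}_{\mathrm{on}}$ starting from solution $S_\tau$. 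The subtlety here is that departures occur in between; the trick, as foreshadowed in the text, is that the flexible-with-ratio-$\beta$ property charges only $\beta[\Vol(I_{t'})-\Vol(I_t)]$ for the newly arrived volume, and departures only decrease volume, so the error accumulated by $\alg{A}_{\mathrm{on}}$ over the phase is at most $\beta \cdot A_t + c_{\mathrm{on}}$, where $A_t$ is the volume inserted since $\tau$. Hence $\costs(S_t) \le \costs(S_\tau) + \beta A_t + c_{\mathrm{on}} \le (\gamma+\epsilon)\OPT(I_\tau) + \beta A_t + c$.

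Next I would bound $A_t$ in terms of $\OPT(I_t)$ using the phase definition. Since the phase has not yet ended at time $t$ (or $t$ is the last step before it ends), we have $A_t + R_t \le \epsilon V_\tau$, in particular $A_t \le \epsilon V_\tau = \epsilon \Vol(I_\tau) \le \epsilon \OPT(I_\tau)$, where the last inequality uses that $\Pi$ is space related — wait, actually space-relatedness is not among the hypotheses of \autoref{thm:main}; instead I use \autoref{assumption} together with the fact that a flexible algorithm with ratio $\beta$ is space related with ratio $\beta$, which gives $\Vol(I_\tau) \le \costs(S_\tau)$ only indirectly. Cleaner: $\Vol(I_\tau) = V_\tau$ and since $\alg{A}_{\mathrm{off}}$ is an approximation algorithm its output has cost at least $\Vol(I_\tau)$ is not guaranteed either — so I should instead just bound $A_t \le \epsilon V_\tau$ and relate $V_\tau$ to $\OPT(I_t)$. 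We have $\OPT(I_\tau) \le \OPT(I_t) + (\text{something})$? Not monotone in general. The right route: $\OPT(I_\tau)$ need not be $\le \OPT(I_t)$, but $\OPT(I_\tau) \le \OPT(I_t \cup \mathrm{Dep}_t)$ and $\OPT$ of a union is at most... this is where the argument needs care. The intended fix, I believe, is that the framework as stated really does need $\Pi$ space related (so $\Vol(I) \le \OPT(I)$); I will assume this is implicitly in force for the packing problems at hand, or carry the extra hypothesis. Then $V_\tau = \Vol(I_\tau) \le \Vol(I_t) + R_t \le \OPT(I_t) + \epsilon V_\tau$, so $V_\tau \le \OPT(I_t)/(1-\epsilon) \le 2\OPT(I_t)$ using $\epsilon \le 1/2$; and $\OPT(I_\tau) \le $ ... again I need an upper bound on $\OPT(I_\tau)$, which I get from $\OPT(I_\tau) \le \Vol(I_\tau) \cdot(\text{no})$ — rather I bound $\costs(S_\tau)$ directly: redo the offline step. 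Let me restructure: bound $\costs(S_\tau) \le (\gamma+\epsilon)\OPT(I_\tau) + c$. Now $I_\tau \subseteq I_{\tau-1} \cup I_\tau$ and the instance at the end of the previous phase... the cleanest is $\OPT(I_\tau) \le \Vol(I_\tau)\cdot \gamma'$? No. The honest bound: at the moment the previous phase ended and we built $S_\tau$, $\OPT(I_\tau)$ compares to $\OPT(I_t)$ via $I_\tau \setminus I_t \subseteq \mathrm{Dep}_t$ and $I_t \setminus I_\tau \subseteq \mathrm{Ins}_t$. Write $I_\tau \subseteq I_t \cup \mathrm{Dep}_t$; by \autoref{assumption}(2), $\OPT(I_t) \le \OPT(I_t \cup \mathrm{Dep}_t)$, wrong direction. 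So genuinely I must use space-relatedness: $\OPT(I_\tau) \le \Val(S_\tau^{\text{via packing } I_t \text{'s solution plus } \mathrm{Dep}_t)}$ — pack an optimal solution of $I_t$ and add the departed items using $\beta$-space-relatedness of $\alg{A}_{\mathrm{on}}$: $\OPT(I_\tau) \le \OPT(I_t) + \beta R_t + c \le \OPT(I_t) + \beta \epsilon V_\tau + c$. Combined with $V_\tau \le 2\OPT(I_t) + O(1)$ this closes the loop.

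Putting it together: $\costs(S_t) \le (\gamma+\epsilon)(\OPT(I_t) + \beta\epsilon V_\tau + c) + \beta \epsilon V_\tau + c \le (\gamma+\epsilon)\OPT(I_t) + O(1)\beta\epsilon \OPT(I_t) + O(1) = (\gamma + O(1)\beta\epsilon)\OPT(I_t) + O(1)$, absorbing the $\epsilon \OPT$ term into $O(1)\beta\epsilon \OPT$ since $\beta \ge 1$. For the migration bound, I would argue that migration only happens at phase-end repackings, where at most the whole current solution is repacked, costing at most $\Vol(I_{\tau'}) \le V_\tau + A_{\tau'} \le (1+\epsilon)V_\tau \le 2 V_\tau$; and the phase triggered because $A_{\tau'} + R_{\tau'} > \epsilon V_\tau$, i.e. the volume inserted-or-departed within the phase exceeds $\epsilon V_\tau$, so this migration is at most $2V_\tau \le (2/\epsilon)(A_{\tau'}+R_{\tau'})$, which telescopes over phases to give amortized migration $O(1/\epsilon) \cdot [\Vol(A_t)+\Vol(D_t)]$. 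The running time claim is immediate since at each step we invoke at most one of the two algorithms on the current instance. The main obstacle is the second paragraph: correctly bounding $\OPT(I_\tau)$ against $\OPT(I_t)$ across a phase in which items depart — the monotonicity assumption goes the wrong way, and one must instead use space-relatedness of $\alg{A}_{\mathrm{on}}$ to rebuild a cheap solution of $I_\tau$ from one of $I_t$, paying only $\beta R_t$ extra.
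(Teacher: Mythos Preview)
Your proposal is correct and follows essentially the same route as the paper: bound $\costs(S_t)$ by $\costs(S_\tau)+\beta A_t+c$ via flexibility, bound $\costs(S_\tau)$ by $(\gamma+\epsilon)\OPT(I_\tau)+c$ via the offline algorithm, bound $\OPT(I_\tau)\le \OPT(I_t)+\beta\epsilon V_\tau+c$ by feeding an optimal solution of $I_t$ plus the departed items back into $\alg{A}_{\mathrm{on}}$ (this is exactly the paper's key Claim), and close with $V_\tau\le 2\OPT(I_t)$ from space-relatedness and $\epsilon\le 1/2$; the migration and running-time arguments also match. Your observation that space-relatedness of $\Pi$ is used in the proof but not listed among the hypotheses of the theorem is accurate---the paper silently invokes it in the line ``Since the problem $\Pi$ is space related\ldots''.
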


\begin{proof}
Migration factor: Let $I_\tau$ be the instance at the start of a phase with
volume $V_\tau=\Vol(I_{\tau})$ and let $\tau'$ be the ending time of this phase.
As the phase ends at time $\tau'$, we have $A_{\tau'}+R_{\tau'}>\epsilon
V_{\tau}$ (where $A_{\tau'}$ and $R_{\tau'}$ are defined as in
\autoref{def:phases}). As we only ever migrate items at the end of a
phase, we only need to consider the amortized migration factor at these time
points. We assign the volume of all items inserted and departed during the phase
that ends at time $\tau'$ to this phase. Hence, a total migration potential of
$A_{\tau'}+R_{\tau'}$ was build up in this phase. The total volume of the
instance at this point is at most $V_{\tau}+A_{\tau'}$ and our amortized
migration factor is thus 
\begin{align*}
 \frac{V_{\tau}+A_{\tau'}}{A_{\tau'}+R_{\tau'}} \leq \frac{V_{\tau}}{A_{\tau'}+R_{\tau'}} +1 < \frac{V_{\tau}}{\epsilon V_{\tau}} +1 = O(1/\epsilon). 
\end{align*}
As the phases are disjoint, the total amortized migration factor is thus also
bounded by $O(1/\epsilon)$.

Competitiveness: To show the competitiveness of $\alg{ALG}$, assume that
\begin{align*}
  \label{eq:off}
  \tag{$*_{\mathrm{off}}$}
  \alg{A}_{\mathrm{off}}(I_{t})\leq (\gamma+\epsilon)\cdot \OPT(I_{t})+c_{\mathrm{off}}
\end{align*}
for some constant $c_{\mathrm{off}}$ and furthermore
\begin{align*}
  \label{eq:on}
  \tag{$*_{\mathrm{on}}$}
  \alg{A}_{\mathrm{on}}(I_{t'},S)\leq \costs(S)+\beta[\Vol(I_{t'})-\Vol(I_{t})]+c_{\mathrm{on}}
\end{align*}
for some constant $c_{\mathrm{on}}$.

Let $I\in \mathcal{I}^{\mathrm{on}}$ be some online instance and $t\in
\{1,\ldots,|I|\}$. We distinguish whether $t$ is the end  of a phase or in
the middle of a phase. If $t=\tau$ is the end of a phase, we use the offline
algorithm $\alg{A}_{\mathrm{off}}$ and thus have
\begin{align*}
  \alg{ALG}(I_{t})=\alg{A}_{\mathrm{off}}(I_{t})\leq  (\gamma+\epsilon)\cdot \OPT(I_{t})+c_{\mathrm{off}},
\end{align*}
where the inequality follows from \cref{eq:off}. 
Now consider any point of time $t$ during the phase starting at $\tau$. Like
above, let $A_t,R_t$ denote the total volumes of arrived and removed items in
this phase up time $t$. Note that this time $A_t+R_t \le \epsilon V_{\tau}$,
since otherwise we would repack.

\begin{claim}
  \label{claim:opt_tau}
  We have $\OPT(I_{\tau})\leq \OPT(I_{t})+\beta \epsilon V_{\tau}+c_{\mathrm{on}}$.
\end{claim}
\begin{proof}
  By \autoref{assumption}, the value $\OPT(I_{t})$ is minimal if only departures
  happened. We can thus assume 
  w.\,l.\,o.\,g.~that up till time $t$ some items were removed and no new items arrived. Consider an
  optimal solution $S_t$ for the instance $I_t$, where items departed and let
  $D_{t}=I_{\tau}\setminus I_{t}$ be the set of departed items. Let
  $d_{1},\ldots,d_{k}$ be some total ordering of $D_{t}$. 
  We now construct
  a new online instance $I'$ that somehow reverses the removal of $D_{t}$. The
  instance $I'$ is of length $t+k$ with $I'_{i}=I_{i}$ for
  $i\leq t$. For $i > t$, we define $I'_{i}=I_{i-1}\cup \{d_{i}\}$. We now use
  the online algorithm $\alg{A}_{\mathrm{on}}$ on this instance $I'$ with
  solution $S_{t}$ for instance $I_{t}$. At the end of instance $I'$,
  \cref{eq:on} implies that $\alg{A}_{\mathrm{on}}$ gives a feasible solution to
  $I'_{t+k}=I_{t}\cup D_{t}=I_{\tau}$ with
  \begin{align*}
    \alg{A}_{\mathrm{on}}(I'_{t+k},S_{t})\leq &\costs(S_{t})+\beta[\Vol(I_{t+k})-\Vol(I_{t})]+c_{\mathrm{on}}=\\
    &\OPT(I_{t})+\beta[\Vol(I_{\tau})-\Vol(I_{t})]+c_{\mathrm{on}}=\\
    &\OPT(I_{t})+\beta \Vol(D_{t})+c_{\mathrm{on}}\leq\\
    &\OPT(I_{t})+\beta \epsilon V_{\tau}+c_{\mathrm{on}}. 
  \end{align*}
  The last inequality follows from the fact that
  $R_{t}=\Vol(D_{t})$ by definition and $R_{t}\leq \epsilon V_{\tau}$ by
  assumption. As $\alg{A}_{\mathrm{on}}$ produces a feasible solution for
  $I'_{t+k}=I_{\tau}$, we clearly have $\OPT(I_{\tau})\leq
  \alg{A}_{\mathrm{on}}(I'_{t+k},S_{t})$ thus proving the claim.  
\end{proof}

Since the problem $\Pi$ is space related, we can conclude that 
\begin{align*}
  \label{eq:opt_vol}
  \tag{*}
\OPT(I_t) \geq \Vol(I_{t}) =   V_{\tau} +A_{t} -R_{t} \geq V_{\tau} -R_t \ge V_{\tau} - \epsilon V_{\tau} \ge \frac{V_{\tau}}{2}.
\end{align*}
Hence $V_{\tau}\leq 2\OPT(I_{t})$.

Let $S_{\tau}$ be the solution produced by $\alg{A}_{\mathrm{off}}$ at time
$\tau$ that the online algorithm $\alg{A}_{\mathrm{on}}$ is building upon.
Note that we do only remove the departed item at a time point where the offline
algorithm is used. Hence, at time $t$, the online algorithm does not produce a
solution for the instance $I_{t}$, where the departed items are already
removed. The algorithm rather works on the instance that still contains all
items that have departed since time $\tau$. This instance is denoted by
$I'_{t}$. 
We thus have
\begin{align*}
  &\alg{ALG}(I_{t}) = \alg{A}_{\mathrm{on}}(I'_{t},S_{\tau})\leq  \shortrnote{\cref{eq:on}}\\
  &\Val(S_{\tau})+\beta[\Vol(I'_{t})-\Vol(I_{\tau})]+c_{\mathrm{on}}= \shortrnote{$S_{\tau}$ produced by $\alg{A}_{\mathrm{off}}$}\\
&\alg{A}_{\mathrm{off}}(I_{\tau})+\beta[\Vol(I'_{t})-\Vol(I_{\tau})]+c_{\mathrm{on}}\leq \shortrnote{\cref{eq:off}}\\
&(\gamma+\epsilon)\cdot \OPT(I_{\tau})+c_{\mathrm{off}}+\beta[\Vol(I'_{t})-\Vol(I_{\tau})]+c_{\mathrm{on}}\leq \shortrnote{\autoref{claim:opt_tau}}\\
  &(\gamma+\epsilon)\cdot [\OPT(I_{t})+\beta \epsilon V_{\tau}+c_{\mathrm{on}}]+c_{\mathrm{off}}+\beta[\Vol(I'_{t})-\Vol(I_{\tau})]+c_{\mathrm{on}}=\\
  &(\gamma+\epsilon)\cdot [\OPT(I_{t})+\beta \epsilon V_{\tau}+c_{\mathrm{on}}]+c_{\mathrm{off}}+\beta A_{t}+c_{\mathrm{on}}\leq\shortrnote{$A_{t}+R_{t}\leq \epsilon V_{\tau}$}\\
  &(\gamma+\epsilon)\cdot [\OPT(I_{t})+\beta \epsilon V_{\tau}+c_{\mathrm{on}}]+c_{\mathrm{off}}+\beta \epsilon V_{\tau}+c_{\mathrm{on}} =\\
  &(\gamma+\epsilon) \OPT(I_{t})+(\gamma+\epsilon+1)\beta\epsilon V_{\tau}+(\gamma+\epsilon+1)c_{\mathrm{on}}+c_{\mathrm{off}}\leq\shortrnote{\cref{eq:opt_vol}}\\
  &(\gamma+\epsilon)\OPT(I_{t})+2(\gamma+\epsilon+1)\beta\epsilon \OPT(I_{t})+(\gamma+\epsilon+1)c_{\mathrm{on}}+c_{\mathrm{off}}=\\
  &(\gamma+\epsilon+2(\gamma+\epsilon+1)\beta\epsilon)\OPT(I_{t})+(\gamma+\epsilon+1)c_{\mathrm{on}}+c_{\mathrm{off}}. 
\end{align*}
As $\gamma$, $c_{\mathrm{on}}$, and $c_{\mathrm{off}}$ are constants, the last term can be written as
\begin{align*}
&(\gamma+\epsilon+2(\gamma+\epsilon+1)\beta\epsilon)\OPT(I_{t})+(\gamma+\epsilon+1)c_{\mathrm{on}}+c_{\mathrm{off}}\leq \\
&(\gamma+O(1)\cdot \beta\epsilon))\OPT(I_{t})+O(1). 
\end{align*}

The running time bound follows easily from the fact that we essentially only use
$\alg{A}_{\mathrm{on}}$ or $\alg{A}_{\mathrm{off}}$ at any given time. 
\end{proof}

\section{$2$-Dimensional Strip Packing}
In the \emph{online Strip Packing} problem, we are given a two-dimensional strip
of width $1$ and infinite height. 
At time $t$, either a rectangle $r_{t}$ with width $w(r_{t})\leq 1$ and height
$h(r_{t})\leq 1$ is inserted and needs to be packed into this strip or a
rectangle $r_{t}$ is removed from the strip. 
A packing is valid if no two rectangles intersect.
The size $v(r)$ of a rectangle $r$ is defined as $v(r)=h(r)\cdot w(r)$. 
We first focus on the case that rectangles are not allowed to be rotated. 
A simple adaption of our algorithm also handles the case, where rotations by 90
degree are allowed. 
In both cases, the goal is to minimize the height of the produced packing. 
This problem has been studied intensively in the online setting (see for example
the works cited in \cite{christensen2017survey}). 
Jansen \emph{et al.}~\cite{jansen2017strip} studied the static case in the
migration scenario, where rectangles can only arrive.

To use our framework, we need the following ingredients:
\begin{enumerate}[label=\roman*)]
\item We need to show that the Strip Packing problem is space related;
\item we need to construct a flexible online algorithm 
  with ratio $\beta$;
\item We need to construct an offline approximation algorithm. 
\end{enumerate}
Concerning the first point, it is well-known that $\OPT(I_{t})\geq \Vol(I_{t})$,
as the rectangles are not allowed to intersect and the width of the strip is
exactly $1$.

\begin{remark}
  The Strip Packing problem is space related.
\end{remark}

We will now present a flexible online algorithm with ratio $\beta=4$. 
This algorithm is a  simple adaption of the \emph{shelf algorithms} presented by
Baker and Schwarz~\cite{DBLP:journals/siamcomp/BakerS83}.
In the notion of Csirik and Woeginger~\cite{DBLP:journals/ipl/CsirikW97}, this
algorithm would be denoted as $\textsc{SHELF}(\mathsf{FirstFit}, 1/2)$. 
For the sake of completeness, we give a self-contained description and analysis. 
We first define several types of containers. 
A container $c$ of type $\gamma_{0}$ has width $1$ and height $h(c)=1$. 
For $i\in\mathbb{N}_{\geq 1}$, a container of type $\gamma_{i}$ has width $1$
and height $h(c)=2^{-i+1}$. 
For each $i\in \mathbb{Z}_{\geq 0}$, we will have at most one \emph{active}
container of type $\gamma_{i}$. 
For all other containers of this type~--~which we call \emph{closed}~--~we will
guarantee that at least $1/4$ of their volume is used by items.

We perform the following operation whenever
a new rectangle $r_{t}$ arrives:
\begin{compactitem}
\item If $w(r_{t}) \geq 1/2$, check whether a container of type $\gamma_{0}$
  exists. 
  If not, open a new active container of type $\gamma_{0}$ and place $r_{t}$
  into it. 
  If such a container $c$ already exists and $h(r_{t})+\sum_{r\in c}h(r) > 1$,
  declare $c$ as closed and open a new active container of type $\gamma_{0}$. 
  Otherwise ($h(r_{t})+\sum_{r\in c}h(r) \leq 1$), put $r_{t}$ on top of the top
  item in $c$.
\item If $w(r_{t})\leq 1/2$ and $h(r_{t})\in (2^{-i},2^{-i+1}]$, check whether a
  container of type $\gamma_{i}$ exist. If not, open a new active container of
  type $\gamma_{i}$ and place $r_{t}$ into it. If such a container $c$ already
  exists and $w(r_{t})+\sum_{r\in c}w(r) > 1$, declare $c$ as closed and open a
  new active container of type $\gamma_{i}$. Otherwise ($w(r_{t})+\sum_{r\in
    c}w(r) \leq 1$), put $r_{t}$ right to the right-most item in $c$. 
\end{compactitem}

We have at most one active container of type $\gamma_{i}$ for each $i\in
\mathbb{Z}_{\geq 0}$: we only open a new active container if we
simultaneously declare another container of the same type as closed.

We also have the following simple lemma showing that the volume of closed
containers can be bounded.
\begin{lemma}
  \label{lem:strip:volume_bound}
  Let $t$ be any time point, $i\in \mathbb{Z}_{\geq 0}$, and
  $c_{1},\ldots,c_{k}$ be the containers of type $\gamma_{i}$ at time $t$. We
  have
  \begin{align*}
    \sum_{j=1}^{k}\sum_{r\in c_{j}}v(r)\geq 1/4\bigl( \sum_{j=1}^{k}h(c_{j}) \bigr)-2^{-i+1}.
  \end{align*}
\end{lemma}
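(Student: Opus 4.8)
The plan is to split the argument according to whether $i=0$ or $i\ge 1$, because a container of type $\gamma_{0}$ is filled by stacking rectangles on top of each other while a container of type $\gamma_{i}$ with $i\ge 1$ is filled by placing rectangles side by side, and the two situations need different volume estimates. In either case I would first label the containers $c_{1},\dots,c_{k}$ of type $\gamma_{i}$ present at time $t$ in the order in which they were opened. A new container of a given type is opened either because none exists yet or exactly at the moment the current active container of that type is closed, and the freshly opened one becomes active; hence in this ordering $c_{1},\dots,c_{k-1}$ are all closed and only $c_{k}$ can still be active. Recall also that a container is closed precisely when the arriving rectangle no longer fits into it, and that this rectangle is then placed into the newly opened successor container.

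For $i\ge 1$ the bound can be proved container by container. Fix a closed container $c_{j}$ with $j\le k-1$. All its rectangles have height in $(2^{-i},2^{-i+1}]$, and since $c_{j}$ was closed by a rectangle of width at most $1/2$ we get $\sum_{r\in c_{j}}w(r)>1/2$; therefore $\sum_{r\in c_{j}}v(r)=\sum_{r\in c_{j}}w(r)h(r)>2^{-i}\cdot\tfrac12=2^{-i-1}=\tfrac14 h(c_{j})$. Summing this over the $k-1$ closed containers and using $h(c_{j})=2^{-i+1}$ for every $j$ yields $\sum_{j=1}^{k}\sum_{r\in c_{j}}v(r)\ge\tfrac14\sum_{j=1}^{k-1}h(c_{j})=\tfrac14\sum_{j=1}^{k}h(c_{j})-\tfrac14\,2^{-i+1}$, which is even stronger than the claimed inequality.

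For $i=0$ the per-container estimate fails: a $\gamma_{0}$ container can be closed while it holds just one very flat rectangle, so it may use an arbitrarily small fraction of its volume, and one has to amortize an almost-empty container against its successor. Write $H_{j}=\sum_{r\in c_{j}}h(r)$ for the height used in $c_{j}$, and for $j\le k-1$ let $q_{j}$ be the height of the rectangle that closed $c_{j}$. The closure rule gives $H_{j}+q_{j}>1$, and since that rectangle lies in $c_{j+1}$ we have $q_{j}\le H_{j+1}$, hence $\sum_{j=1}^{k-1}q_{j}\le\sum_{j=2}^{k}H_{j}$. Summing $H_{j}>1-q_{j}$ over $j=1,\dots,k-1$ gives $\sum_{j=1}^{k-1}H_{j}+\sum_{j=2}^{k}H_{j}>k-1$, and the left-hand side is at most $2\sum_{j=1}^{k}H_{j}$, so $\sum_{j=1}^{k}H_{j}>(k-1)/2$. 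Every rectangle in a $\gamma_{0}$ container has width at least $1/2$, so the used volume is at least $\tfrac12\sum_{j=1}^{k}H_{j}>(k-1)/4$. Since here $h(c_{j})=1$ for all $j$ and $2^{-i+1}=2$, we obtain $(k-1)/4=\tfrac14\sum_{j=1}^{k}h(c_{j})-\tfrac14\ge\tfrac14\sum_{j=1}^{k}h(c_{j})-2^{-i+1}$, as required. (The small values $k\in\{0,1\}$ are trivial since the right-hand side is then non-positive.)

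I expect the only real obstacle to be the case $i=0$: one must notice that the clean ``$1/4$ of the volume per closed container'' property is genuinely false for the vertical shelves of type $\gamma_{0}$ and replace it by the telescoping/amortization bound above. The case $i\ge 1$ and the bookkeeping around the at-most-one active container per type are routine.
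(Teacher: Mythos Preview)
Your proof is correct, and in the case $i\ge 1$ it is actually cleaner than the paper's. The paper treats both cases by pairing consecutive containers $(c_{1},c_{2}),(c_{3},c_{4}),\ldots$ and using that the combined width (resp.\ height) of the items in each pair exceeds $1$; this yields only $(1/4)\sum_{j}h(c_{j})-2^{-i+1}$ in both cases. You instead observe that for $i\ge 1$ the closing rectangle always has width at most $1/2$, so each closed container individually satisfies $\sum_{r\in c_{j}}w(r)>1/2$ and hence $\sum_{r\in c_{j}}v(r)>\tfrac14 h(c_{j})$; this gives the sharper bound $\tfrac14\sum_{j}h(c_{j})-\tfrac14\cdot 2^{-i+1}$ with less bookkeeping. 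For $i=0$ both arguments are essentially the same amortization idea, but the paper groups into disjoint pairs while you telescope over all overlapping consecutive pairs; your version is arguably tidier and also yields the slightly better additive loss of $1/4$ rather than $1/2$. Your remark that the per-container estimate genuinely fails for $i=0$ is a good diagnostic that the paper's uniform pairing somewhat obscures.
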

\begin{proof}
  If $k=1$, only one container of type $\gamma_{i}$ exists. As $h(c_{j})\leq
  2^{-i+1}$, the inequality holds. The left hand side is strictly larger than
  $0$ and the right hand side is strictly smaller than $0$.

  If $k > 1$, we can argue about the contained volume. If $i=0$, we have opened
  container $j>1$, as $\bigl[ \sum_{r\in c_{j-1}}h(r) \bigr]+\bigl[
  \sum_{r\in c_{j}}h(r) \bigr] > 1$. As all of these rectangles were placed in a
  container of type $\gamma_{0}$, they have width at least $1/2$. Summing up,
  as $h(c_{j})=1$, this gives
  \begin{align*}
    &\sum_{j=1}^{k}\sum_{r\in c_{j}}v(r)\geq \sum_{j=0}^{\lfloor k/2 \rfloor-1}\bigl[  \sum_{r\in c_{2j+1}}v(r)\bigr]+\bigl[  \sum_{r\in c_{2j+2}}v(r)\bigr]=\\
    &\sum_{j=0}^{\lfloor k/2 \rfloor-1}\bigl[  \sum_{r\in c_{2j+1}}h(r)\cdot \underbrace{w(r)}_{\geq 1/2}\bigr]+\bigl[  \sum_{r\in c_{2j+2}}h(r)\cdot \underbrace{w(r)}_{\geq 1/2}\bigr]\geq\\
    &(1/2)\sum_{j=0}^{\lfloor k/2 \rfloor-1}\underbrace{\bigl[  \sum_{r\in c_{2j+1}}h(r)\bigr]\bigl[  \sum_{r\in c_{2j+2}}h(r)\bigr]}_{> 1}> (1/2)[\lfloor k/2 \rfloor]\geq (1/2) (k/2 -1)=\\
    &(1/4)k - (1/2) = (1/4)\sum_{j=1}^{k}h(c_{j})-(1/2)> (1/4)\sum_{j=1}^{k}h(c_{j})-\underbrace{2}_{=2^{-0+1}}. 
  \end{align*}
If $i> 0$, we have opened container
  $c_{j}$ for $j>1$, as $\bigl[ \sum_{r\in c_{j-1}}w(r) \bigr]+\bigl[
  \sum_{r\in c_{j}}w(r) \bigr] > 1$. As all of these rectangles were placed in a
  container of type $\gamma_{i}$, they have height at least $2^{-i}$. Summing
  up, as $h(c_{j})=2^{-i+1}$, this gives
  \begin{align*}
    &\sum_{j=1}^{k}\sum_{r\in c_{j}}v(r)\geq \sum_{j=0}^{\lfloor k/2 \rfloor-1}\bigl[  \sum_{r\in c_{2j+1}}v(r)\bigr]\bigl[  \sum_{r\in c_{2j+2}}v(r)\bigr]=\\
    &\sum_{j=0}^{\lfloor k/2 \rfloor-1}\bigl[  \sum_{r\in c_{2j+1}}\underbrace{h(r)}_{> 2^{-i}}\cdot w(r)\bigr]+\bigl[  \sum_{r\in c_{2j+2}}\underbrace{h(r)}_{> 2^{-i}}\cdot w(r)\bigr]\geq\\
    &2^{-i}\sum_{j=0}^{\lfloor k/2 \rfloor-1}\underbrace{\bigl[ \sum_{r\in c_{2j+1}}w(r) \bigr]+\bigl[ \sum_{r\in c_{2j+2} }w(r) \bigr]}_{>1} > \\
    &2^{-i} \lfloor k/2 \rfloor \geq 2^{-i} (k/2 -1) = (1/4)2^{-i+1}\cdot k - 2^{-i}=\\
    &(1/4)\sum_{j=1}^{k}h(c_{j}) - 2^{-i}\geq (1/4)\sum_{j=1}^{k}h(c_{j}) - 2^{-i+1}. \qedhere
  \end{align*}
\end{proof}

 If we are given a previous packing $S$, we simply build the containers on top
 of it. Clearly, this gives a solution of height $h(S)+\sum_{j=1}^{k}h(c_{j})$,
 if $c_{1},\ldots,c_{k}$ are the containers constructed by the algorithm. We
 will now use \autoref{lem:strip:volume_bound} to show that this algorithm is
 indeed a flexible online algorithm with ratio $4$.

 \begin{theorem}
   \label{thm:strip:flexible}
   The presented algorithm $\alg{A}_{SP}$ is a flexible online algorithm for Strip
   Packing with ratio $4$.
 \end{theorem}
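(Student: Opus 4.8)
The plan is to verify the two defining properties of a flexible online algorithm with ratio $4$: first, that the algorithm is \emph{flexible}, i.e.\ when handed a previous solution $S$ it respects the placement of all old items; and second, the quantitative bound $\alg{A}_{SP}(I_{t'},S)\le \costs(S)+4[\Vol(I_{t'})-\Vol(I_{t})]+c$.

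The first property is essentially immediate from the description: given a previous packing $S$, we place all new containers strictly above everything in $S$, so no old rectangle is moved and $S'(i)=S(i)$ for all $i\in I_t\cap I_{t'}$. (In particular, taking $t=1$ and a trivial one-item solution recovers space-relatedness, as noted in the excerpt.) So the work is entirely in the cost bound.

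For the cost bound, I would argue as follows. The height of the produced solution is $h(S)+\sum_{j=1}^{k}h(c_j)$, where $c_1,\dots,c_k$ are the containers built by the algorithm for the items in $I_{t'}\setminus I_t$; and $\costs(S)\ge h(S)$, in fact $\costs(S)=h(S)$ since cost is height. So it suffices to show $\sum_{j=1}^{k}h(c_j)\le 4[\Vol(I_{t'})-\Vol(I_t)]+c$. Split the sum over the container \emph{types} $\gamma_i$, $i\in\mathbb{Z}_{\ge 0}$. Only finitely many types are ever used, since a rectangle of height $h(r_{t})\in(2^{-i},2^{-i+1}]$ requires $2^{-i}<h(r_t)\le 1$, so $i$ is bounded in terms of the smallest height ever inserted; in fact we can be cruder and observe that $h(c_j)\le 2^{-i+1}$, so container heights shrink geometrically. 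Apply \autoref{lem:strip:volume_bound} to each type $i$: it gives $\frac14\sum_{j:\,\text{type }i}h(c_j)\le \sum_{j:\,\text{type }i}\sum_{r\in c_j}v(r)+2^{-i+1}$. Summing over all types $i\ge 0$, the volume terms on the right add up to at most $\Vol(I_{t'})-\Vol(I_t)$ (each new rectangle lies in exactly one container), and the slack terms $\sum_{i\ge 0}2^{-i+1}=4$ form a convergent geometric series. Hence $\frac14\sum_{j=1}^{k}h(c_j)\le \Vol(I_{t'})-\Vol(I_t)+4$, i.e.\ $\sum_{j=1}^{k}h(c_j)\le 4[\Vol(I_{t'})-\Vol(I_t)]+16$, which is the desired inequality with constant $c=16$.

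The only mild subtlety --- and what I'd flag as the "main obstacle," though it is minor --- is making the type decomposition airtight: one must check that \autoref{lem:strip:volume_bound} applies uniformly to \emph{all} container types present at time $t'$ (including types with only one container, where the lemma's right-hand side is negative and the bound is vacuous but still valid), and that the set of active types is finite so that the geometric sum over the $2^{-i+1}$ error terms is legitimate. Once that bookkeeping is in place, the two properties together give exactly "flexible online algorithm with ratio $4$."
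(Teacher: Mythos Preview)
Your proposal is correct and follows essentially the same route as the paper's proof: flexibility is immediate from stacking containers above $S$, and the cost bound comes from applying \autoref{lem:strip:volume_bound} per container type and summing the resulting $2^{-i+1}$ error terms as a geometric series to obtain the additive constant $16$. The paper does not bother with the finiteness-of-types discussion you flag as a subtlety (it simply sums over all $i\ge 0$, which is fine since the series converges and empty types contribute zero height), but this does not affect correctness.
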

 \begin{proof}
   The flexibility of $\alg{A}_{SP}$ follows directly due to the fact that we only
   build upon the existing packing. We will now show that $\alg{A}_{SP}$ has ratio $4$. Let $t$ be any time point, $S$ be the previous packing
   we built upon, and $c_{1},\ldots,c_{k}$ be
   the containers constructed at time $t$. The set of rectangles contained in
   the containers is denoted as $J$.  As note above, the current packing
   has height $h(S)+\sum_{j=1}^{k}h(c_{j})$.  For $i\in \mathbb{Z}_{\geq 0}$,
   let $C_{i}\subseteq \{c_{1},\ldots,c_{k}\}$ be the set of containers of type
   $\gamma_{i}$. We then have $\sum_{j=1}^{k}h(c_{j}) = \sum_{i\geq 0}\sum_{c\in
     C_{i}}h(c)$. Using \autoref{lem:strip:volume_bound} gives
   \begin{align*}
     &\sum_{i\geq 0}\sum_{c\in C_{i}}h(c)\leq \sum_{i\geq 0}\bigl[  \sum_{c\in C_{i}}h(c)\bigr]-2^{-i+3}+2^{-i+3} =  \bigl( \sum_{i\geq 0}\bigl[  \sum_{c\in C_{i}}h(c)\bigr]-2^{-i+3} \bigr)+\sum_{i\geq 0}2^{-i+3}\leq \\
     &\bigl( \sum_{i\geq 0}\bigl[  \sum_{c\in C_{i}}h(c)\bigr]-2^{-i+3} \bigr)+16=
       4\bigl( \sum_{i\geq 0}\bigl[  \sum_{c\in C_{i}}(1/4)h(c)\bigr]-2^{-i+1} \bigr)+16\underbrace{\leq}_{\text{\autoref{lem:strip:volume_bound}}}\\
     &4\bigl( \sum_{i\geq 0} \sum_{c\in C_{i}}\sum_{r\in C}v(r) \bigr)+16=4\Vol(J)+16. \qedhere
   \end{align*}
 \end{proof}

We have now shown the first two ingredients for our framework: the problem is
space related and we gave a suitable online algorithm. The final piece~--~an
offline approximation algorithm~--~is given by the asymptotic fully polynomial
time approximation scheme (AFPTAS) of Kenyon and
R{\'{e}}mila~\cite{kenyon2000strip}, which is an
$1+\epsilon$-approximation. We can thus use \autoref{thm:main} with $\gamma=1$ and $\beta=4$
to conclude the following theorem.

\begin{theorem}
  There is a robust online algorithm for the dynamic Strip Packing problem
  that is $1+\epsilon$-competitive and has amortized
  migration factor $O(1/\epsilon)$.
\end{theorem}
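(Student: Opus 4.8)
The plan is to instantiate the general framework of \autoref{thm:main} with $\gamma = 1$ and $\beta = 4$. First I would check the three hypotheses. Strip Packing satisfies \autoref{assumption}: deleting rectangles from a valid (non-overlapping) packing leaves a valid packing of the remaining rectangles, and deleting rectangles cannot increase the minimum achievable height. By the remark above it is space related, i.e.\ $\OPT(I) \ge \Vol(I)$. By \autoref{thm:strip:flexible}, the shelf-based algorithm $\alg{A}_{SP}$ is a flexible online algorithm with ratio $\beta = 4$, so it can play the role of $\alg{A}_{\mathrm{on}}$. Finally, for every fixed $\delta > 0$ the AFPTAS of Kenyon and R\'emila~\cite{kenyon2000strip} is a polynomial-time $(1+\delta)$-approximation for offline Strip Packing, so with $\delta = \epsilon'$ (for a parameter $\epsilon'$ chosen below) it can play the role of $\alg{A}_{\mathrm{off}}$ with $\gamma = 1$.

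Given the desired accuracy $\epsilon > 0$, I would then run $\alg{ALG}(\alg{A}_{SP}, \alg{A}_{\mathrm{off}})$ with internal parameter $\epsilon' := \min\{1/2,\ \epsilon/K\}$, where $K$ is the absolute constant hidden in the $O(1)$ of \autoref{thm:main} (unwinding that proof with $\gamma = 1$, $\beta = 4$, $\epsilon' \le 1/2$ one gets competitive ratio $1 + 2(\gamma+\epsilon'+1)\beta\epsilon' \le 1 + 20\epsilon'$, so $K = 20$ suffices). \autoref{thm:main} then guarantees that the resulting algorithm is $(1 + K\epsilon')$-competitive, hence $(1+\epsilon)$-competitive, with amortized migration factor $O(1/\epsilon') = O(1/\epsilon)$; it handles departures and runs in polynomial time at each step because both $\alg{A}_{SP}$ and the AFPTAS do.

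The hard part is already behind us: it was the volume bound of \autoref{lem:strip:volume_bound} and the flexibility-with-ratio-$4$ analysis of \autoref{thm:strip:flexible}. The only remaining subtlety is the rescaling of the accuracy parameter — \autoref{thm:main} delivers $1 + O(\beta\epsilon')$ rather than $1 + \epsilon'$, so one must shrink $\epsilon'$ by the constant factor $\Theta(\beta)$; this costs only a constant factor in the migration bound and therefore does not affect the claimed $O(1/\epsilon)$.
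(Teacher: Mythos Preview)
Your proposal is correct and follows exactly the paper's approach: instantiate \autoref{thm:main} with $\gamma=1$, $\beta=4$, using the space-relatedness remark, \autoref{thm:strip:flexible} for $\alg{A}_{\mathrm{on}}$, and the Kenyon--R\'emila AFPTAS for $\alg{A}_{\mathrm{off}}$. Your explicit rescaling of $\epsilon$ is a detail the paper leaves implicit; one nitpick is that the competitive ratio from \autoref{thm:main} is $\gamma+\epsilon'+2(\gamma+\epsilon'+1)\beta\epsilon'$, so with $\gamma=1$, $\beta=4$, $\epsilon'\le 1/2$ you get $1+21\epsilon'$ rather than $1+20\epsilon'$, but this of course changes nothing.
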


\subsection*{Rotations}
If rotations by 90 degree are allowed, the resulting problem is called Strip
Packing With Rotations. For an instance $I$, we denote the height of a
corresponding optimal packing by $\OPT_{R}(I)$. As the volume of a rotated
rectangle does not change, we have $\OPT_{R}(I)\geq \Vol(I)$. Similarly, the 
volume bound of \autoref{lem:strip:volume_bound} also remains true. 
We can thus conclude the following adaption of \autoref{thm:strip:flexible}.

\begin{theorem}
  The presented algorithm $\alg{A}_{SP}$ is a flexible online algorithm for Strip
   Packing With Rotations with ratio $4$.
 \end{theorem}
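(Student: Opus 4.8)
The plan is to transfer the proof of \autoref{thm:strip:flexible} essentially verbatim, since allowing rotations changes neither the structure of the algorithm nor the quantities its analysis rests on. First I would describe the single modification to $\alg{A}_{SP}$: upon arrival of a rectangle $r_t$ the algorithm commits to one of its two orientations~--~any fixed deterministic rule works, e.g.\ rotate $r_t$ precisely when this lets it fit into an already open shelf and otherwise keep its given orientation~--~and then runs the unchanged shelf procedure on the chosen dimensions. Flexibility would then follow immediately and for the same reason as before: the algorithm only ever stacks freshly opened containers on top of the supplied packing $S$ and never touches an already placed rectangle, so $S'(i)=S(i)$ for all $i\in I_t\cap I_{t'}$, and the packing it produces has height $h(S)+\sum_{j=1}^{k}h(c_j)$, where $c_1,\dots,c_k$ are the containers opened during the current phase.

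For the ratio bound I would re-examine the two facts the proof of \autoref{thm:strip:flexible} uses: (i) every closed container of type $\gamma_i$ is at least $1/4$ full, which is the content of \autoref{lem:strip:volume_bound}, and (ii) the lower bound $\OPT_R(I)\ge\Vol(I)$. Both are unaffected by rotation. The container-opening rule and the ``at least $1/4$ full'' guarantee depend only on the width and height with which a rectangle is \emph{actually placed}, i.e.\ on its chosen orientation, so \autoref{lem:strip:volume_bound} holds verbatim for those dimensions; and since a $90^\circ$ rotation leaves the area $v(r)=h(r)\,w(r)$ unchanged, the volume appearing in that lemma is exactly the orientation-independent size $v(r)$. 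Likewise $\Vol(I)$ is rotation-invariant and a non-overlapping packing into a strip of width $1$ still has height at least $\Vol(I)$, giving $\OPT_R(I)\ge\Vol(I)$ as noted in the text.

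With these observations, the chain of inequalities in the proof of \autoref{thm:strip:flexible}~--~telescoping $\sum_{i\ge 0}2^{-i+3}=16$ and applying \autoref{lem:strip:volume_bound} type by type~--~carries over without any change, yielding $\sum_{j=1}^{k}h(c_j)\le 4\,\Vol(J)+16$ for the set $J$ of rectangles packed during the phase, i.e.\ $\alg{A}_{SP}(I_{t'},S)\le\costs(S)+4[\Vol(I_{t'})-\Vol(I_t)]+16$, which is precisely flexibility with ratio $4$. I expect no real obstacle here; the only point needing care is to state the container classification in terms of the orientation the algorithm selects, so that \autoref{lem:strip:volume_bound} applies as is~--~a purely bookkeeping matter rather than a mathematical one.
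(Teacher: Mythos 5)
Your proposal is correct and matches the paper's (very brief) argument: both rest on the observations that \autoref{lem:strip:volume_bound} depends only on the dimensions with which each rectangle is actually placed, and that rotation preserves area, so the proof of \autoref{thm:strip:flexible} carries over unchanged. The orientation-selection rule you add is an optional embellishment~--~the paper simply reuses $\alg{A}_{SP}$ as is~--~but it does not affect the analysis since the container classification and the $1/4$-fullness guarantee are stated in terms of the placed orientation.
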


 Instead of using the classical AFPTAS by Kenyon and
 R{\'{e}}mila~\cite{kenyon2000strip}, we use the AFPTAS of Jansen and van
 Stee~\cite{jansen2005strip} for the case that rotations are allowed. Using
 \autoref{thm:main} with $\gamma=1$ and $\beta=4$ gives the following theorem.

\begin{theorem}
  There is a robust online algorithm for the dynamic Strip Packing With Rotations problem
  that is $1+\epsilon$-competitive and has amortized
  migration factor $O(1/\epsilon)$.
\end{theorem}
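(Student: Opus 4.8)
The plan is to apply the framework theorem, \autoref{thm:main}, to the rotational variant of Strip Packing, using the three ingredients already assembled. First I would confirm that Strip Packing With Rotations satisfies \autoref{assumption}: because the chosen orientation of each rectangle is part of the solution, restricting a feasible (pairwise non-overlapping) packing of an instance $I$ to a subinstance $I'\subseteq I$ --- leaving the surviving rectangles where they are, with their orientations --- is still non-overlapping, so $S\restriction_{I'}\in\Sol(I')$; and taking $S$ optimal for $I$ gives $\OPT_{R}(I')\le h(S\restriction_{I'})\le h(S)=\OPT_{R}(I)$, so departures cannot raise the optimal height. Both parts of \autoref{assumption} thus hold verbatim as in the non-rotational case.

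Next I would line up the remaining hypotheses of \autoref{thm:main}. The problem is space related, since rotation preserves area and the rectangles must lie disjointly in a strip of width $1$, hence $\OPT_{R}(I)\ge\Vol(I)$. The online algorithm $\alg{A}_{SP}$ is flexible with ratio $\beta=4$ for the rotational variant by the theorem just stated, and its additive constant (the $+16$ coming from $\sum_{i\ge0}2^{-i+3}$) does not depend on the instance. For the offline piece I would take the AFPTAS of Jansen and van Stee~\cite{jansen2005strip}, which is a $(1+\epsilon)$-approximation, so $\gamma=1$. Feeding $\alg{A}_{\mathrm{on}}=\alg{A}_{SP}$ and $\alg{A}_{\mathrm{off}}=$ this AFPTAS into \autoref{thm:main} then produces an online algorithm that is $\bigl(1+O(1)\cdot4\epsilon\bigr)$-competitive, i.e.\ $(1+O(\epsilon))$-competitive, with amortized migration factor $O(1/\epsilon)$.

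To reach the clean statement, I would finally rescale: run the construction with parameter $\epsilon'=\epsilon/K$, where $K$ is the constant hidden in the $O(1)$ of \autoref{thm:main} (and $\epsilon$ small enough that $\epsilon'\le 1/2$, the only range the theorem needs); the result is $(1+\epsilon)$-competitive with amortized migration factor $O(1/\epsilon')=O(1/\epsilon)$. I do not expect a genuine obstacle, since essentially everything is already done in \autoref{thm:main}, in the rotational analogue of \autoref{thm:strip:flexible}, and in the cited AFPTAS; the only points needing care are the two bookkeeping items above --- that \autoref{assumption} is unaffected by allowing orientations (it is, as orientations are carried inside the solution and preserved under restriction), and that the additive constants of both component algorithms are uniform in the instance (the online one by the explicit $+16$, the offline one by the definition of asymptotic approximation).
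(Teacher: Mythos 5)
Your proposal is correct and follows essentially the same route as the paper: observe that rotation preserves area so the problem remains space related, reuse $\alg{A}_{SP}$ as a flexible online algorithm with ratio $4$ (the volume bound of \autoref{lem:strip:volume_bound} is unaffected by rotations), plug in the Jansen--van Stee AFPTAS as the offline component, and invoke \autoref{thm:main} with $\gamma=1$ and $\beta=4$. Your explicit verification of \autoref{assumption} and the final rescaling of $\epsilon$ to absorb the hidden constant are details the paper leaves implicit, but they do not constitute a different approach.
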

 
The best known online algorithm with migration known is due to Jansen \emph{et
  al.}~\cite{jansen2017strip}. It also is
$1+\epsilon$-competitive, but an amortized migration factor of
$O(1/\epsilon^{9}\log^{2}(1/\epsilon))$, only works for the static case (no
rectangles are removed), and cannot handle rotations.

\section{Bin Packing}

\subsection{2-Dimensional Bin Packing}
\label{sec:2d:binpacking}
In 2-D Bin Packing, each item $i$ is given by its height $h_i\leq 1$ and its
width $w_i\leq 1$. The goal is to pack these items non-overlapping into as few
unit-sized squares (called \emph{bins}) as possible. As above, we will show the following:
\begin{enumerate}[label=\roman*)]
\item We need to show that the 2-D Bin Packing problem is space related;
\item We need to construct a flexible online algorithm 
  with ratio $\beta$;
\item We need to construct an offline approximation algorithm. 
\end{enumerate}
As the rectangles are not allowed to overlap and each bin has a total volume of $1$, 2-D Bin Packing is space related.

\begin{remark}
  The 2-D Bin Packing problem is space related.
\end{remark}

We will now present a flexible online algorithm.
This algorithm is a  simple extension of the classical algorithm presented by
Coppersmith and Raghavan~\cite{coppersmith1989multidimensional}.
For the sake of completeness, we give a self-contained description and analysis. 
We categorize items as follows: We call an item
\emph{vertical} if $w_{i}\leq h_{i}$ and \emph{horizontal} if $w_i>h_i$. Note
that squares with $h_i=w_i$ will be considered as vertical. Without loss of generality
we will explain in the following how to pack vertical items. Horizontal items can and 
will be placed into separate bins with the same strategy, just altered for horizontal items. 
Note that we later also need to account for these horizontal bins.

We further assign each item $i$ a \emph{size class}. Item $i$ is in size class 
$j\in \mathbb{N}_{\ge 1}$ if $\nicefrac{1}{2^{j-1}} \ge h_i >\nicefrac{1}{2^{j}}$.
 Further we say an item $i$ is \emph{square-like}, if $i$
is in size class $j$ and furthermore $ w_i > \nicefrac{1}{2^{j}}$. The general
idea is that for every arriving item in size class $j$, we will assign a square
slot of size $\nicefrac{1}{2^{j-1}}$ in some bin. A square slot of this size is
called a \emph{slot of class $j$}. 
Note that an item of size class $j$ always fits into a slot of class $j$, as we only handle vertical items with
$w_{i}\leq h_{i}$ here. Our goal is to fill all opened slots with items
until $\nicefrac{1}{4}$ of the total area of the slot is
covered. Square-like items will immediately fill such a slot to this extent. For
the other items, we will \emph{reserve} such slots for a size class $j$ and stack items
from left to right until the total width of all items in that slot exceeds
$\nicefrac{1}{2^{j}}$. Since the height of items assigned to this slot exceed
$\nicefrac{1}{2^{j}}$ as well, $\nicefrac{1}{4}$ of the total slot will be
covered at that point. A slot can have three states: it is either (i)
\emph{empty} and thus contains no item, (ii) \emph{reserved} for class $j$ and
thus only contains items of class $j$ or (iii) \emph{closed} if at least
$\nicefrac{1}{4}$ of its total volume is filled with items.

In order to assign items to these slots, we will keep up to two open bins. 
The first open bin will hold items of size class $1$ and we use the complete bin
as a single reserved slot. 
The second open bin will receive items of size class $\ge 2$. 
Initially, this bin is split into four empty slots of class $2$.
We now define our online algorithm $\alg{A}_{2\text{-}D}$:
\begin{enumerate}
\item If a square-like item of size class $1$ arrives, we open a new bin, place
  the item in there and close it immediately. 
  By definition of size class $1$, at least $\nicefrac{1}{4}$ of the volume of
  this bin will be filled.
\item If a non-square-like item $i$ of size class $1$ arrives, we place it to
  the right of the non-square-like items in the first open bin or on the left
  end, if no such items exist. 
  As $w_{i}\leq \nicefrac{1}{2}$, we can always place this item in the bin. 
  If the sum of the widths of the items in the first bin now is at least
  $\nicefrac{1}{2}$, close this bin and open a new first bin. 
  As every item placed in the bin has height $h_{i}\geq \nicefrac{1}{2}$, at
  least $\nicefrac{1}{4}$ of the volume of this bin will be filled. 
\item If an item $i$ of class $j\geq 2$ arrives, find a non-closed slot of class
  $j'\leq j$ with maximal $j'$ in the second open bin.
  \begin{enumerate}
  \item If $j'=j$, we choose a slot reserved for $j'$ or~--~if no such slot
    exists~--~the top-most left-most of the slots of class $j'$. 
    We declare this slot as reserved for $j'$ and place the item right of the
    items placed in this slot or on the left end, if no such items exist. 
    As $w_{i}\leq h_{i}\leq \nicefrac{1}{2^{j-1}}$, we can always place $i$ in
    this slot. 
    If the sum of the widths of the items placed in this slot is at least
    $\nicefrac{1}{2^{j}}$, we close this slot.
  \item If $j' < j$, note that we can split such a slot into four equal-sized
    slots of class $j'+1$. 
    Splitting one slot will leave us with four slots: 
    three of the slots will remain empty and the last slot will either be split
    even more (if its class $j'+1$ is smaller than $j$) or we reserve this slot
    for class $j$ and put the new item into it. 
    We repeat this splitting until we have created a slot of class $j$ and put
    $i$ on the left end. 
    If the sum of the widths of the items placed in this slot is at least
    $\nicefrac{1}{2^{j}}$ (i.\,e.~if $i$ is square-like), we close this slot.
  \item If no slot of size $j' \leq j$ exists, we close the second bin and open
    a new second bin containing four empty slots of class $2$. 
    We place $i$ into this bin as above. 
  \end{enumerate}
\end{enumerate}

We will now show that every bin created by the algorithm contains a most three
empty slots of a certain class. 
\begin{lemma}
  \label{lemma:at-most-three}
  For each $j\geq 1$, every bin created by $\alg{A}_{2\text{-}D}$ contains at most three empty slots of class $j$. 
\end{lemma}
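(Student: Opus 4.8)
The plan is to track, for each class $j$, how empty slots of class $j$ are created and destroyed during the execution of $\alg{A}_{2\text{-}D}$ within a single bin, and argue that at no point can their number exceed three. First I would observe that empty slots of class $j$ are created in exactly two ways: (a) when a new second bin is opened, in which case only slots of class $2$ are created, and there are exactly four of them; and (b) when a slot of some class $j' < j$ is split in step 3(b), which turns one (non-empty-producing) slot into four slots of class $j'+1$, three of which stay empty while the fourth is immediately used (either split further or reserved for the incoming item). So each split produces at most three fresh empty slots of class $j'+1$, and the splitting cascade that serves a single arriving item produces at most three empty slots of each intermediate class strictly between $j'$ and the item's class, and at most three (in fact the three siblings) of the item's own class.

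The key point is then that these two creation mechanisms never \emph{accumulate}. When a fresh second bin is opened (mechanism (a)), it starts with four empty slots of class $2$ and no slots of any other class; the very next item placed uses one of them (possibly after splitting it), so immediately after opening we have three empty class-$2$ slots and, if splitting happened, at most three empty slots of each class down to the item's class. For mechanism (b): a split of a class-$j'$ slot is triggered only when, in step 3, the maximal non-closed slot available has class $j' < j$; crucially, this means there was \emph{no} non-closed slot of class $j'+1$ (or any class in $(j', j]$) at that moment — otherwise the algorithm would have picked it. Hence just before a split creates new empty slots of class $j'+1$, the count of empty slots of class $j'+1$ was zero, so afterwards it is three, and one of those three is used right away by the continuing cascade (leaving possibly three, since the used one is at the next level — wait: the used fourth slot is of class $j'+1$ too if $j'+1 = j$, so we must check this boundary). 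I would handle the boundary case $j'+1 = j$ separately: there the incoming item is placed into the fourth slot, so only three empty class-$j$ slots remain; and if later another class-$j$ item arrives it either reuses a reserved/existing slot or, if all class-$j$ slots are closed/used, triggers another split of a \emph{lower} class — but a lower class slot only exists if the three empty class-$j$ slots were already consumed, so we never climb back above three.

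The induction I would set up is on time (number of items processed into the current bin), with invariant: \emph{for every $j \ge 1$, the current bin has at most three empty slots of class $j$}. The base case is the moment a bin is opened (four empty class-$2$ slots, then the first item immediately reduces this to three as argued, and the first/second-bin distinction makes class $1$ trivial — the first bin is one reserved slot, never split). For the inductive step I would case-split on the three branches of step 3: branch (a) only reserves/closes an existing slot, never creating empty slots, so the invariant is preserved trivially; branch (c) closes the bin, so it is irrelevant to the \emph{current} bin's final count (and the new bin is handled by the base case); branch (b) is the only one that creates empty slots, and the argument above shows that each class whose empty-slot count increases had count zero immediately before (by the maximality rule in step 3 forcing the absence of non-closed higher-class slots), so it rises to exactly three, except possibly the top class of the cascade where the arriving item consumes the fourth slot.

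The main obstacle I anticipate is cleanly justifying the claim that \emph{the split is triggered only when no non-closed slot of an intermediate class exists}, and using it to conclude the pre-split count of empty slots at the created class is zero — this requires carefully distinguishing ``non-closed'' (empty or reserved) from ``empty'', and checking that a \emph{reserved} slot of class $j'+1$ would also have been preferred by the maximality rule, hence also cannot be present. Once that structural observation about step 3's selection rule is pinned down, the rest is a routine induction on the sequence of items assigned to a bin.
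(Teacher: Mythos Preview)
Your proposal is correct and rests on the same key observation as the paper: a split creating new empty slots of some class $j^{*}$ is only triggered when the maximality rule in step~3 found no non-closed slot of class $j^{*}$ (or any class in $(j',j]$), so the empty-slot count at class $j^{*}$ was zero immediately before and is exactly three immediately after. The paper packages this as a short proof by contradiction (consider the first step at which some class would exceed three, and derive a contradiction with the maximality choice), whereas you unfold it into an explicit time-induction with a case split on the branches of step~3; the content is the same.

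Two small remarks. First, your worry about distinguishing ``non-closed'' from ``empty'' is right to flag but easy to discharge: the algorithm selects a \emph{non-closed} slot of maximal class, so if a split at class $j'$ happens there is no non-closed---hence in particular no empty---slot of any class in $\{j'+1,\dots,j\}$. Second, be careful with the base case: the freshly opened second bin momentarily has four empty class-$2$ slots, but the opening in step~3(c) is immediately followed by placing the arriving item, so after the atomic step the count is three; your induction should be on completed insertion steps, not on intermediate states. The paper sidesteps this by the contradiction framing (``before the arrival of $i$ there are at most three\ldots''), which implicitly takes completed steps as the time unit.
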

\begin{proof}
  For $j=1$, every bin contains at most a single slot of class $1$.
  
  For $j\geq 2$, assume that this is not the case and our algorithm may generate
  a slot assignment with four or more slots of the same class $j^{*}$. 
  Consider some arriving item $i$ of class $j$ such that before the arrival of
  $i$ there are at most three empty slots of class $j^{*}$ and afterwards, there
  are four or more of these slots. 
  Since the number of empty slots of class $j^{*}$ increased, our algorithm must
  have split a slot of class $j' < j^{*}$. 
  Note that this will only happen if $j^{*}\leq j$. 
  Splitting this slot of class $j'$ over and over will create three slots of
  class $j'+1,j'+2,\ldots,j^{*}-1$. 
  Finally, four slots of class $j^{*}$ are created. 
  One of this slots will either be directly used for $i$ (if $j^{*}=j$) or will
  be split into smaller slots. 
  Hence, only three new slots of class $j^{*}$ are created. 
  By assumption, the insertion of $i$ leads to at least four slots of class
  $j^{*}$. 
  Hence, there must have been a slot of class $j^{*}$ before. 
  But, by definition of the algorithm, we would rather have chosen this slot
  instead of the slot of class $j'$, as $j' < j^{*}$. 
  Hence, this is a contradiction. 
\end{proof}

To make the algorithm $\alg{A}_{2\text{-}D}$ flexible, we simply ignore all
previous used bins and only work on the newly generated bins. 

\begin{theorem}
\label{thm:2d:flexible}
  The proposed algorithm $\alg{A}_{2\text{-}D}$ for 2-D Bin Packing is a
  flexible online algorithm with ratio $\frac{48}{5}$.
\end{theorem}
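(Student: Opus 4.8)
The plan is to establish two things: that $\alg{A}_{2\text{-}D}$ is flexible, and that it is space related with ratio $48/5$ (by the remark after the flexibility definition, this combination yields ``flexible with ratio $48/5$''). Flexibility is immediate, just as for the Strip Packing algorithm: since we only ever work on freshly opened bins when a new item arrives and never touch bins used before the previous solution $S$, the placement of old items is preserved, so $\alg{A}_{2\text{-}D}(I_{t'},S)$ agrees with $S$ on $I_{t}\cap I_{t'}$. The substance is the volume bound, which I would phrase as: if $B$ is the number of bins used by the algorithm at time $t$ and $J$ is the set of items packed, then $B \le \tfrac{48}{5}\Vol(J) + O(1)$.

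The core of the argument is a charging/accounting over bins, split into the vertical and horizontal families (by symmetry it suffices to bound the vertical bins, then double). First I would dispose of the size-class-$1$ bins: each closed bin of type 1 has at least $1/4$ of its volume filled (square-like items fill a full bin to $\ge 1/4$; non-square-like items accumulate width $\ge 1/2$ at heights $\ge 1/2$), so these contribute at most $4\Vol(J_1)$ bins up to an additive constant for the one open bin. The interesting part is the second family of bins, holding classes $\ge 2$. Here I would show that every \emph{closed} second bin has volume at least $5/48$ filled; then the number of such bins is at most $\tfrac{48}{5}\Vol(J_{\ge 2})$ plus one for the currently open bin. Combining, $B_{\text{vert}} \le \max\{4, 48/5\}\,\Vol(J_{\text{vert}}) + O(1) = \tfrac{48}{5}\Vol(J_{\text{vert}}) + O(1)$, and doubling for horizontal bins still gives ratio $48/5$ against $\Vol(J)$ since the two families partition $J$.

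The heart is therefore the claim that a closed second bin is $\ge 5/48$ full, and this is where \autoref{lemma:at-most-three} does the work. A second bin gets closed only when an arriving item of class $j$ finds no non-closed slot of class $j' \le j$ — i.e.\ every slot currently in the bin is closed. Every closed slot has at least $1/4$ of \emph{its} area covered by items. So the covered volume in the bin is at least $\tfrac14\sum_{\text{slots }s} \operatorname{area}(s)$, and I need a lower bound on the total slot area in a bin all of whose slots are closed. Using \autoref{lemma:at-most-three}, within any bin there are at most three empty slots per class — but in a \emph{fully closed} bin there are zero empty slots; what matters is reconstructing how much area the closed slots occupy. I would track the bin as a recursive quadtree subdivision of a unit square: whenever a class-$j'$ slot is split, it is replaced by four class-$(j'+1)$ slots of total equal area, so the total area of all current slots is always exactly $1$. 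Hence in \emph{any} second bin the slots have total area exactly $1$, and once the bin is closed all that area lies in closed slots, each $\ge 1/4$ full, giving $\ge 1/4$ filled — which would actually give ratio $4$, better than $48/5$. The subtlety that forces $48/5$ must be that a bin is closed not when \emph{all} slots are closed but when no \emph{further} slot can be allocated to the incoming item's class, so some empty slots of classes $> j$ (which cannot host a class-$j$ item, since splitting would overshoot) remain, and the deepest-relevant accounting only guarantees coverage on a $5/12$ fraction of slot area (three empty small slots out of four when splitting, compounded), times the $1/4$ slot-fill, yielding $5/48$.

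The main obstacle, then, is pinning down exactly this worst-case configuration of a closed second bin and proving the $5/48$ density rigorously: one must argue that when the bin is closed, the reserved/closed slots together with the unavoidable at-most-three empty small slots at each level account for enough area, and that the recursive splitting wastes at most a $3/4$ factor at the one level where splitting stopped. I would formalize this by induction on the quadtree depth, showing that any subtree rooted at a class-$j'$ slot that has been touched contributes covered volume at least $\tfrac{5}{12}\cdot\tfrac14 = \tfrac{5}{48}$ of its own area, with the $5/12$ coming from ``one of four children is fully processed ($\ge 1/4$ area, $\ge 1/4$ full) plus the three empty children contribute nothing but the processed child's neighbours were themselves reserved'' — the precise constant-chasing is routine once the right invariant is chosen, but choosing that invariant so the numbers close to exactly $48/5$ is the delicate step, and then the final bin count and the doubling for horizontal items follow immediately.
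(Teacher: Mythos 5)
Your overall architecture is right---flexibility via fresh bins, a per-bin density bound for closed bins, closed slots being at least $1/4$ full, and \autoref{lemma:at-most-three} controlling empty slots---but the quantitative heart of the theorem, namely that every closed bin has at least $\nicefrac{5}{48}$ of its volume covered, is not actually established, and the route you sketch would not produce that constant. The decisive omission is the \emph{reserved} slots: when a second bin is closed by an item of class $j$, it can still contain, for each class, one slot that is reserved (it holds some items, but their total width has not yet reached half the slot's side length, so no $1/4$-fullness can be guaranteed for it). The paper's accounting simply discards the items in reserved slots and charges their area, at most one slot per class, as lost: $\sum_{j\ge 1}\nicefrac{1}{4^{j}}=\nicefrac{1}{3}$ of the bin. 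Separately, \autoref{lemma:at-most-three} bounds the empty slots by three per class, and since a closed bin cannot contain an empty class-$2$ slot this loss is $3\sum_{i\ge 2}\nicefrac{1}{4^{i}}=\nicefrac{1}{4}$. What remains, $1-\nicefrac{1}{3}-\nicefrac{1}{4}=\nicefrac{5}{12}$ of the bin, lies in closed slots, each at least $1/4$ full, giving $\nicefrac{5}{48}$. Your write-up never mentions reserved slots, which are the larger of the two loss terms; your first pass (``every slot is closed, hence density $1/4$'') fails for exactly this reason, and your proposed repair---a quadtree induction in which ``one of four children is fully processed and three are empty,'' compounded over levels---does not match the algorithm's behaviour (the three siblings created by a split need not remain empty, and the lemma's bound of three empty slots is per class over the whole bin, not per split), so there is no evident way to make such an invariant close to $5/12$.

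Two smaller points. First, your explanation of why empty slots of class $>j$ survive the closing of the bin is backwards: such slots are too \emph{small} to hold a class-$j$ item, not victims of overshooting splits. Second, your treatment of the additive constant and of the vertical/horizontal split is fine in spirit (the two families partition the items, so the multiplicative ratio is not doubled), but the count of simultaneously open bins is four---one class-$1$ bin and one higher-class bin for each orientation---which is worth stating explicitly, since the asymptotic ratio is only meaningful together with a bounded additive term.
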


\begin{proof}
  The flexibility of $\alg{A}_{2\text{-}D}$ follows directly from the fact that
  we only work on newly created bins.

  To analyze the ratio of the algorithm we look at how much total area is
  covered when a complete bin has been closed.
  First we can observe that every closed slot inside a closed bin has at least
  an overall area of $\frac{1}{4}$ covered: 
  we only close a slot if half of its width is covered and we only put items
  into a slot that cover at least half of the height.
  
  Another easy observation is the fact that the algorithm may have at most one
  reserved slot for each class, since the algorithm only opens a new reserved
  slot for a class whenever the current reserved one is closed. 
  We may open a second slot for a square-like item, but will also close it
  immediately. 
  By neglecting the items in reserved slots, the unused space due to reserved
  slots in every bin therefore can be bounded by
  \begin{align*}
    \tag{*}
    \label{eq:res-slots}
  \sum_{j=1}^{\infty}\nicefrac{1}{4^j} = \frac{1}{1-\nicefrac{1}{4}}-1 =
  \frac{1}{3}.
    \end{align*}
  By \autoref{lemma:at-most-three}, we know that every bin contains at most
  three empty slots of each size class. 
  Note that a closed bin cannot contain an empty slot of class $2$, as such a
  bin will only receive items of this size class or higher. 
  With this we can now conclude that the unused space due to empty slots in a
  closed bin is at most
  \begin{align*}
    \tag{**}
    \label{eq:empty-slots}
    3 \sum_{i=2}^{\infty}\nicefrac{1}{4^i} =3
  [\frac{1}{1-\nicefrac{1}{4}}-1-\nicefrac{1}{4}]= 3\cdot  \nicefrac{1}{12} = \nicefrac{1}{4}.
  \end{align*}
  
  By combining \cref{eq:res-slots} and \cref{eq:empty-slots}, at least a total
  area of $1 - \nicefrac{1}{4} - \nicefrac{1}{3} = \nicefrac{5}{12}$ of every
  closed bin is occupied by closed slots. 
  As discussed earlier, every such closed slot has at least $\nicefrac{1}{4}$ of
  its total area covered so we finally can conclude that at least a total area
  of $\nicefrac{5}{48}$ of every closed bin must be covered.

  The algorithm keeps at most four open bins, one for items of size class $1$,
  one for items of other size classes, and we need two respective bins for
  horizontal items containing the same classes as well. 
  Finally we get that our algorithm for an instance $I$ with total volume
  $\Vol(I)$ will use at most $\nicefrac{48}{5}\Vol(I) +4$ bins in total.
\end{proof}

Finally, we can use the approximation algorithm of Bansal and
Khan~\cite{bansal2014improved} that is an $1{.}405$-approximation
for 2-D Bin Packing. We can thus use \autoref{thm:main} with $\gamma=1{.}405$ and $\beta=\nicefrac{48}{5}$
to conclude the following theorem.

\begin{theorem}
  There is a robust online algorithm for the dynamic 2-D Bin Packing
  problem that is $1{.}405+\epsilon$-competitive and has
  amortized migration factor $O(1/\epsilon)$.
\end{theorem}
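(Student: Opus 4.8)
The plan is to obtain the result as a direct instance of the framework of \autoref{thm:main}, so the work reduces to checking its three hypotheses for 2-D Bin Packing and then carrying out a harmless rescaling of the accuracy parameter. First, 2-D Bin Packing satisfies \autoref{assumption}: deleting items from a feasible packing leaves the remaining items in feasible positions, and the optimal number of bins is monotone under item removal. Second, by the Remark above the problem is space related, since every bin has area exactly $1$ and items are packed without overlap, so $\OPT(I)\ge \Vol(I)$ for every instance $I$. Third, \autoref{thm:2d:flexible} provides a flexible online algorithm $\alg{A}_{2\text{-}D}$ with ratio $\beta=\nicefrac{48}{5}$, and the $1{.}405$-approximation of Bansal and Khan~\cite{bansal2014improved} is, trivially, a $(\gamma+\epsilon)$-approximation for the offline problem with $\gamma=1{.}405$ and every $\epsilon>0$. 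Thus all ingredients required by \autoref{thm:main} are in place.

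Next I would instantiate \autoref{thm:main} with $\alg{A}_{\mathrm{on}}=\alg{A}_{2\text{-}D}$, $\alg{A}_{\mathrm{off}}$ the Bansal--Khan algorithm, $\gamma=1{.}405$, $\beta=\nicefrac{48}{5}$, and a parameter $\delta\le 1/2$ played in the role of the theorem's $\epsilon$. This yields an online algorithm $\alg{ALG}(\alg{A}_{2\text{-}D},\alg{A}_{\mathrm{off}})$ that is $(\gamma+O(1)\cdot\beta\delta)$-competitive with amortized migration factor $O(1/\delta)$, and whose per-step running time is polynomial (the sum of the running times of the two component algorithms). Since $\beta$ is an absolute constant, the competitive ratio is $1{.}405+C\delta$ for a fixed constant $C=O(1)\cdot\beta$; choosing $\delta=\min\{\epsilon/C,\ 1/2\}$ makes this at most $1{.}405+\epsilon$, while the migration factor becomes $O(1/\delta)=O(\max\{C/\epsilon,2\})=O(1/\epsilon)$. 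Since departures are already absorbed by the phase structure inside \autoref{thm:main} (via \autoref{claim:opt_tau}), the resulting algorithm handles the dynamic setting, and by construction it is robust: for every $\epsilon>0$ it is $(1{.}405+\epsilon)$-competitive with migration bounded by a function of $1/\epsilon$ alone.

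The real content here has already been discharged: it lives in \autoref{thm:2d:flexible}, i.e.\ in showing that the Coppersmith--Raghavan-style container packing is flexible with a \emph{constant} ratio. Consequently, the only point in this last theorem that needs any care is the bookkeeping of the accuracy parameter, so that the stated bound is exactly $1{.}405+\epsilon$ rather than $1{.}405$ plus a constant multiple of $\epsilon$; this is where the constant hidden in the migration bound $O(1/\epsilon)$ picks up its dependence on $\beta=\nicefrac{48}{5}$ (and hence on the quality of the flexible online algorithm), but not on the instance. No genuine obstacle remains beyond this.
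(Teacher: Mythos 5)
Your proposal is correct and follows essentially the same route as the paper: verify space relatedness, invoke \autoref{thm:2d:flexible} for the flexible online algorithm with $\beta=\nicefrac{48}{5}$, take the Bansal--Khan $1.405$-approximation as the offline component, and apply \autoref{thm:main}. The only addition is your explicit rescaling $\delta=\min\{\epsilon/C,\nicefrac{1}{2}\}$ to convert the $\gamma+O(1)\beta\epsilon$ guarantee into $1.405+\epsilon$, a bookkeeping step the paper leaves implicit but which is exactly right.
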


To the best of our knowledge, this is the first robust online algorithm for
dynamic 2-D Bin Packing. 
Note that the best known lower bound for the competitiveness of any online
algorithm for online 2-D Bin Packing without migration is $1{.}856$ due to Van
Vliet~\cite{van1995lower}. 

\subsection*{Rotations}
As for Strip Packing, allowing rotations of the rectangles by 90 degrees gives
rise to a problem called 2-D Bin Packing With Rotations. The corresponding
optimal number of bins needed to pack instance $I$ is denoted by $\OPT_{R}(I)$.
Rotations are invariant with regard to the volume of a rectangle and thus
$\OPT_{R}(I)\geq \Vol(I)$. We can thus again use our online algorithm
$\alg{A}_{2\text{-}D}$ to obtain the following adaption of
\autoref{thm:2d:flexible}. 

\begin{theorem}
  The proposed algorithm $\alg{A}_{2\text{-}D}$ for 2-D Bin Packing With
  Rotations is a flexible online algorithm with ratio $\frac{48}{5}$.  
\end{theorem}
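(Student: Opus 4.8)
The plan is to observe that the proof of \autoref{thm:2d:flexible} never used the prohibition of rotations, so it transfers almost verbatim. First I would note that flexibility itself is inherited: the rotational variant of $\alg{A}_{2\text{-}D}$ still discards all previously used bins and packs only the newly arrived items into fresh bins, so on input $I_{t'}$ together with a solution $S$ of $I_{t}$ it outputs a solution agreeing with $S$ on $I_{t}\cap I_{t'}$. This structural property is independent of whether a rectangle may be turned.

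Next I would revisit the ratio argument. The only facts it rests on are: (i) the slot geometry — a class-$j$ slot is a square of side $\nicefrac{1}{2^{j-1}}$, items of size class $j$ fit into it, and a slot becomes \emph{closed} once at least $\nicefrac{1}{4}$ of its area is covered (automatically for square-like items, and after the stacked widths exceed $\nicefrac{1}{2^{j}}$ otherwise); (ii) \autoref{lemma:at-most-three}, bounding by three the number of empty slots of each class in a bin; (iii) the bookkeeping that at most four bins are open at any time; and (iv) the geometric-series estimates \cref{eq:res-slots} and \cref{eq:empty-slots}. Each of these is a statement purely about how rectangles sit inside bins and slots and about the control flow of $\alg{A}_{2\text{-}D}$, none of which changes when rotations become permissible; in particular the statement and proof of \autoref{lemma:at-most-three} are unaffected. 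Hence every closed bin still has at least $\nicefrac{5}{48}$ of its area covered, and an instance $I$ is still packed into at most $\nicefrac{48}{5}\Vol(I)+4$ bins.

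Finally I would close the gap between this volume bound and the definition of ``flexible with ratio $\nicefrac{48}{5}$''. Since a $90^{\circ}$ rotation preserves $v(r)=h(r)w(r)$, the quantity $\Vol(I)$ — and therefore the bound $\Val(S_{t})\le \nicefrac{48}{5}\,\Vol(I_{t})+4$ — is literally the same as in the non-rotational case, which is exactly space-relatedness with ratio $\nicefrac{48}{5}$. Flexibility with ratio $\nicefrac{48}{5}$ then follows as in \autoref{thm:2d:flexible}: running $\alg{A}_{2\text{-}D}$ on the items of $I_{t'}\setminus I_{t}$ on fresh bins adds at most $\nicefrac{48}{5}[\Vol(I_{t'})-\Vol(I_{t})]+4$ to $\costs(S)$, giving the required inequality with $c=4$.

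The honest assessment is that there is no genuine obstacle here: the point is precisely to recognize that the original analysis is rotation-agnostic, so the ``work'' is verifying that claim step by step rather than devising a new argument. (One could even exploit the rotational freedom to turn every item into a vertical one and drop the two horizontal bin streams, improving the additive constant from $4$ to $2$; since this does not change the ratio $\nicefrac{48}{5}$, I would mention it only as a side remark.)
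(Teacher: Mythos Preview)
Your proposal is correct and matches the paper's approach: the paper simply notes that rotations leave the volume $v(r)=h(r)w(r)$ invariant, so $\OPT_{R}(I)\geq \Vol(I)$ still holds and the entire analysis of \autoref{thm:2d:flexible} carries over unchanged. You spell out the verification in more detail than the paper (which gives only the one-line justification before the theorem statement), but the underlying argument is identical.
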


The approximation algorithm of Bansal and Khan~\cite{bansal2014improved} used
above can also handle the case of rotation and thus is an
$1{.}405$-approximation for 2-D Bin Packing With Rotations. 
We can thus use \autoref{thm:main} with $\gamma=1{.}405$ and
$\beta=\nicefrac{48}{5}$ to conclude the following theorem.

\begin{theorem}
  There is a robust online algorithm for the dynamic 2-D Bin Packing With
  Rotations problem that is $1{.}405+\epsilon$-competitive and has amortized
  migration factor $O(1/\epsilon)$.
\end{theorem}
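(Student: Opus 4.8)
The plan is to assemble the three ingredients the framework of \autoref{thm:main} requires, all of which are already in place, and then invoke the framework with the rotation-aware offline algorithm.

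First I would note that 2-D Bin Packing With Rotations satisfies \autoref{assumption}: discarding items from a feasible packing leaves a feasible packing, and removing items cannot increase the optimum number of bins. Second, by the remark above, the problem is space related, since rotating a rectangle preserves its area and each bin has total area $1$, so $\OPT_{R}(I)\geq \Vol(I)$. Third, by the preceding theorem the algorithm $\alg{A}_{2\text{-}D}$ is a flexible online algorithm with ratio $\beta = \nicefrac{48}{5}$ for the rotational variant as well (the slot-based analysis only uses areas and the $h_i,w_i\leq 1$ bounds, which are invariant under $90^\circ$ rotations). For the offline component I would use the $1{.}405$-approximation of Bansal and Khan~\cite{bansal2014improved}, which they show also applies when rotations are allowed; this supplies an $\alg{A}_{\mathrm{off}}$ with $\gamma = 1{.}405$ that is even a $(\gamma+\delta)$-approximation for any $\delta>0$.

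Next I would apply \autoref{thm:main} with $\gamma = 1{.}405$, $\beta = \nicefrac{48}{5}$, and a parameter $\delta \in (0,\nicefrac12]$ to obtain $\alg{ALG}(\alg{A}_{2\text{-}D}, \alg{A}_{\mathrm{off}})$, which is $(\gamma + O(1)\cdot\beta\delta)$-competitive with amortized migration factor $O(1/\delta)$. The only point requiring a word of care — and the closest thing to an obstacle, though it is routine — is converting the $O(1)\beta\delta$ error term into the advertised additive $\epsilon$: since $\beta$ and $\gamma$ are absolute constants, the $O(1)\beta\delta$ term is $C\delta$ for some explicit constant $C$, so running the framework with $\delta := \epsilon/C$ yields competitive ratio $1{.}405 + \epsilon$ and amortized migration factor $O(1/\delta) = O(1/\epsilon)$. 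Because the migration bound depends only on the target quality $\epsilon$, the resulting algorithm is robust by definition, completing the proof.

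I do not anticipate any genuine difficulty beyond this rescaling bookkeeping; every structural fact needed (space-relatedness, flexibility with a constant ratio, an offline approximation handling rotations) has already been established, so the argument is a direct instantiation of the general framework.
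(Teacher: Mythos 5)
Your proposal is correct and follows essentially the same route as the paper: note that rotations preserve area so the problem remains space related, reuse $\alg{A}_{2\text{-}D}$ as a flexible online algorithm with ratio $\nicefrac{48}{5}$, take the rotation-capable $1{.}405$-approximation of Bansal and Khan as the offline component, and invoke the framework theorem with $\gamma=1{.}405$ and $\beta=\nicefrac{48}{5}$. The only addition is your explicit rescaling of the error parameter to absorb the $O(1)\beta\epsilon$ term, which the paper leaves implicit.
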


To the best of our knowledge, this is the first robust online algorithm for
dynamic 2-D Bin Packing With Rotations. 
Note that the best known lower bound for the competitiveness of any online
algorithm for 2-D Bin Packing With Rotations without migration is $1{.}6707$ due
to Heydrich and van Stee~\cite{heydrich2016hypercube}.

\subsection{$d$-Dimensional Bin Packing}
\label{sec:d:binpacking}

We will now look at the problem of packing $d$-dimensional hyperrectangles into
as few unit-sized hypercubes as possible for higher dimensions $d > 2$. 
This problem is called \emph{$d$-Dimensional Hyperrectangle Packing}. 

To obtain an suitable flexible online algorithm for this problem, we will
generalize the $2$-dimensional Bin Packing algorithm $\alg{A}_{2\text{-}D}$ from
Section~\ref{sec:2d:binpacking} The side length of dimension $i\in
\{1,\ldots,d\}$ of a $d$-dimensional hyperrectangle $r$ is denoted as $r_{i}$
and its volume is $v(r)=\prod_{i=1}^{d}r_{i}$. 
Items will be classified almost the same as before in a straight-forward
generalization. 
For a permutation $\pi\colon \{1,\ldots,d\}\to \{1,\ldots,d\}$, we associate a
set of rectangles $r$ with $r_{\pi(1)}\leq r_{\pi(2)}\leq \ldots r_{\pi(d)}$
with it.
We treat each of these permutations separately and pack the corresponding
rectangles in separate bins. 
This will only give an additive error of $d!$ in our approximation guarantee. 
In the following, we thus fix a permutation $\pi$. 
A rectangle associated with $\pi$ is in size class $j\in \mathbb{N}_{\geq 1}$,
if $\nicefrac{1}{2^{j-1}} \ge r_{\pi(d)} >\nicefrac{1}{2^{j}}$, hence the class
of a rectangle depends on its largest side length.
Rectangles of class $j$ are packed into slots that are hypercubes with side
length $\nicefrac{1}{2^{j-1}}$. 
Such a slot is called a \emph{slot of class $j$}. 
Note that an item of size class $j$ always fits into a slot of class $j$, as we
fixed the permutation $\pi$. 
We now proceed as in Section~\ref{sec:2d:binpacking}. 
If an item of class $j$ arrives, we put it into an open slot of this class. 
If no such slot exists, we split a slot of class $j'\leq j$ into $2^{d}$ slots
of class $j'+1$ recursively until an empty slot of class $j$ is created. 
We denote this straight-forward generalization of $\alg{A}_{2\text{-}D}$ as
$\alg{A}_{d\text{-}D}$. 
It is easy to see that the $d$-dimensional analogue of
Lemma~\ref{lemma:at-most-three} also holds. 

\begin{lemma}
  \label{lemma:d:at-most-three}
  For each $\pi$ and each $j\geq 1$, every bin created by $\alg{A}_{d\text{-}D}$
  contains at most $2^{d}-1$ empty slots of class $j$ associated with $\pi$. 
\end{lemma}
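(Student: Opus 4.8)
The plan is to mirror the proof of \autoref{lemma:at-most-three} almost verbatim, replacing the constant $4$ (the number of sub-slots produced by one split in two dimensions) by $2^{d}$, and consequently the bound $3$ by $2^{d}-1$. First I would dispose of the base case $j=1$: by construction the only slots of class $1$ that ever occur are the single reserved slot filling a bin devoted to class-$1$ items (possibly a bin holding one square-like class-$1$ item that is closed immediately), so such a bin contains at most one — hence at most $2^{d}-1$ — slot of class $1$, and a bin devoted to classes $\geq 2$ contains none.

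For $j\geq 2$ I would argue by contradiction: suppose some execution produces a bin holding $2^{d}$ or more empty slots of a common class $j^{*}$ (associated with the fixed $\pi$), and look at the first item $i$ whose insertion pushes the number of empty class-$j^{*}$ slots in that bin from at most $2^{d}-1$ to at least $2^{d}$. The only algorithmic step that can increase the number of empty slots of class $j^{*}$ is the recursive split that $\alg{A}_{d\text{-}D}$ performs when it must create a fresh slot of class $j$ for $i$, starting from a non-closed slot of some class $j'<j$; this forces $j'<j^{*}\leq j$. Splitting that class-$j'$ slot into $2^{d}$ slots of class $j'+1$, then recursing on one of them, and so on, creates for every intermediate class $\ell\in\{j'+1,\dots,j\}$ exactly $2^{d}-1$ new empty slots of class $\ell$, the remaining $2^{d}$-th slot being either split further or, at level $j$, used for $i$. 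In particular this step introduces exactly $2^{d}-1$ new empty slots of class $j^{*}$.

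Hence, for the total to reach $2^{d}$, there had to be at least one empty (non-closed, non-reserved) slot of class $j^{*}$ already present in the bin before $i$ arrived. But $\alg{A}_{d\text{-}D}$ chooses a non-closed slot of class $j'\leq j$ with $j'$ \emph{maximal}; since $j'<j^{*}\leq j$, a pre-existing empty slot of class $j^{*}$ would have been selected in preference to the class-$j'$ slot, so no split would have occurred — a contradiction. This proves the claim.

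The only point that needs care — and which I expect to be the main, if minor, obstacle — is the bookkeeping that a recursive chain of splits contributes exactly $2^{d}-1$ empty slots at each class level it passes through, together with the observation that reserved or closed slots never count as empty (so square-like items, which may momentarily open a slot and immediately close it, do not interfere with the count). Everything else is a literal transcription of the two-dimensional argument of \autoref{lemma:at-most-three}.
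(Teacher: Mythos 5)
Your proof is correct and takes essentially the same route as the paper, which simply asserts this lemma as the straightforward $d$-dimensional analogue of \autoref{lemma:at-most-three} and spells out the identical argument for the hypercube case in \autoref{lemma:at-most-2d-minus}: replace the split factor $4$ by $2^{d}$, note that each recursive split contributes exactly $2^{d}-1$ new empty slots at every intermediate class level, and derive the contradiction from the maximality of $j'$. Your bookkeeping of the recursive splits and the handling of reserved versus empty slots match the paper's reasoning.
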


Every closed slot inside a closed bin has at least   an overall area of $2^{-d}$
covered, as in the proof of Theorem~\ref{thm:2d:flexible}. We thus obtain  the
following generalization of Theorem~\ref{thm:2d:flexible} by not using any bins
created by the previous solution $S$. 
\begin{theorem}
\label{thm:d:flexible}
  The proposed algorithm $\alg{A}_{d\text{-}D}$ for $d$-Dimensional Hyperrectangle Packing is a
  flexible online algorithm with ratio $\frac{2^{2d}-3\cdot
  2^{d}+1}{2^{2d}(2^{d}-1)}$.
\end{theorem}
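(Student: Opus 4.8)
The plan is to mirror the structure of the proof of \autoref{thm:2d:flexible}, adapting each numerical bound to $d$ dimensions. First I would establish flexibility: since $\alg{A}_{d\text{-}D}$ is defined to ignore all bins of the previous solution $S$ and only operate on newly created bins, the property $S'(i)=S(i)$ for all previously packed items $i$ holds trivially, exactly as in the $2$-dimensional case. The real work is to bound the ratio $\beta$, i.e.\ to show $\alg{A}_{d\text{-}D}(I)\leq \beta\cdot\Vol(I)+c$ for the claimed $\beta=\frac{2^{2d}-3\cdot 2^{d}+1}{2^{2d}(2^{d}-1)}$ (read as the reciprocal-style bound: each closed bin has covered volume at least the reciprocal of this, so the bin count is at most $\beta\Vol(I)$ plus a constant).

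The key steps, in order: \textbf{(1)} Fix a permutation $\pi$ and argue, as in Theorem~\ref{thm:2d:flexible}, that at most one slot per class can be \emph{reserved} at any time (a new reserved slot of a class is opened only when the previous one closes, and a momentarily-opened second reserved slot for a square-like item is closed immediately). Summing the wasted volume over all classes gives $\sum_{j\geq 1} (2^{-(j-1)})^{d}\cdot(\text{correction})$; more precisely the reserved-slot waste in a bin is $\sum_{j=1}^{\infty} 2^{-dj}$ — wait, a slot of class $j$ has volume $2^{-d(j-1)}$, but the reserved-slot bound should match the $2$-D value $\sum_j 4^{-j}=1/3$; in general it is $\sum_{j=1}^{\infty} 2^{-dj} = \frac{1}{2^{d}-1}$. \textbf{(2)} Invoke Lemma~\ref{lemma:d:at-most-three}: each closed bin has at most $2^{d}-1$ empty slots of each class, and (since a second-type bin receives only items of class $\geq 2$) no empty slot of class $2$. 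Hence the empty-slot waste is at most $(2^{d}-1)\sum_{i=2}^{\infty}2^{-di} = (2^{d}-1)\cdot\frac{2^{-2d}}{1-2^{-d}} = \frac{2^{-2d}(2^{d}-1)}{1-2^{-d}}=\frac{1}{2^{d}}$. \textbf{(3)} Subtract both waste terms from $1$ to get that a fraction at least $1-\frac{1}{2^{d}-1}-\frac{1}{2^{d}}$ of each closed bin lies in closed slots, and each closed slot is at least $2^{-d}$ covered (half of each of the $d$ side lengths of the slot is used). Multiplying, each closed bin is covered to a fraction $\geq 2^{-d}\bigl(1-\tfrac{1}{2^{d}-1}-\tfrac{1}{2^{d}}\bigr)$, which simplifies to $\frac{2^{2d}-3\cdot 2^{d}+1}{2^{2d}(2^{d}-1)}$ — giving the claimed ratio as its reciprocal. \textbf{(4)} Account for the constant additive term: per permutation there are at most a bounded number of open (non-closed) bins (one for class $1$, one for class $\geq 2$, and the analogous ones for each ``orientation''), and there are $d!$ permutations, so the additive constant $c$ is $O(d!)$.

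The main obstacle I anticipate is step \textbf{(2)}/\textbf{(3)}: carefully verifying that the ``no empty slot of class $2$ in a closed bin'' argument still goes through in $d$ dimensions (it does, since the second-type bin is initialized as $2^{d}$ empty class-$2$ slots and only ever receives class $\geq 2$ items, so any surviving class-$2$ slot would have been used before closing), and then pushing the geometric-series arithmetic through to land exactly on the stated closed form $\frac{2^{2d}-3\cdot 2^{d}+1}{2^{2d}(2^{d}-1)}$. One has to be a little careful that Lemma~\ref{lemma:d:at-most-three} bounds empty slots \emph{per class}, and that the $2^{d}-1$ factor combines with the geometric tail starting at $i=2$ rather than $i=1$; getting the index of summation right is where the $-3\cdot 2^{d}$ term in the numerator comes from. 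Everything else is a direct transcription of the $2$-dimensional argument with $4$ replaced by $2^{d}$ and ``area'' replaced by ``$d$-volume''. Since Lemma~\ref{lemma:d:at-most-three} is assumed and the slot-splitting generalization is already described, no new structural idea is needed beyond this bookkeeping.
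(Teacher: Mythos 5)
Your proposal is correct and follows essentially the same route as the paper's own proof sketch: flexibility by ignoring old bins, the reserved-slot waste $\sum_{j\geq 1}2^{-dj}=\frac{1}{2^{d}-1}$, the empty-slot waste $(2^{d}-1)\sum_{i\geq 2}2^{-di}=\frac{1}{2^{d}}$ via Lemma~\ref{lemma:d:at-most-three}, and the final multiplication by the per-slot coverage $2^{-d}$. Your observation that the stated quantity is the covered fraction (so the competitive ratio is really its reciprocal) matches how the paper's proof actually concludes, and your extra bookkeeping of the $O(d!)$ additive constant is a harmless refinement.
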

\begin{proof}[Proof Sketch]
  By neglecting the items in reserved slots, the unused space due to reserved
  slots in every bin can be bounded by
  \begin{align*}
    \tag{*}
    \label{eq:d:res-slots}
  \sum_{j=1}^{\infty}\nicefrac{1}{(2^{d})^j} = \frac{1}{1-\nicefrac{1}{2^{d}}}-1 =
  \frac{1}{2^{d}-1}.
    \end{align*}
    By \autoref{lemma:d:at-most-three}, we know that every bin contains at most
    $2^{d}-1 $ empty slots of each size class for each $\pi$.
    With this we can now conclude that the unused space due to empty slots in a
    closed bin is at most
  \begin{align*}
    \tag{**}
    \label{eq:d:empty-slots}
    (2^{d}-1) \sum_{i=2}^{\infty}\nicefrac{1}{(2^{d})^i} =(2^{d}-1)
  [\frac{1}{1-\nicefrac{1}{2^{d}}}-1-\nicefrac{1}{2^{d}}]= (2^{d}-1)\cdot  \nicefrac{1}{(2^{d}-1)\cdot 2^{d}} = \nicefrac{1}{2^{d}}.
  \end{align*}
  By combining \cref{eq:d:res-slots} and \cref{eq:d:empty-slots}, at least a total
  area of $1 - \nicefrac{1}{2^{d}} - \nicefrac{1}{2^{d}-1} = \frac{2^{2d}-3\cdot
  2^{d}+1}{2^{d}(2^{d}-1)}$ of every
  closed bin is occupied by closed slots. 
  As discussed earlier, every such closed slot has at least $\nicefrac{1}{2^{d}}$ of
  its total area covered so we finally can conclude that at least a total area
  of $\frac{2^{2d}-3\cdot
  2^{d}+1}{2^{2d}(2^{d}-1)}$ of every closed bin must be covered.
\end{proof}

The best known offline algorithm for the $d$-dimensional Hyperrectangle Packing
problem has ratio $1{.}69103^{d-1}$ and is due to
Caprara~\cite{caprara2002harmonic} (see also~\cite{christensen2017survey}).
We can thus use \autoref{thm:main} with $\gamma=1{.}69103^{d-1}$ and
$\beta=\frac{2^{2d}-3\cdot
  2^{d}+1}{2^{2d}(2^{d}-1)}$ to conclude the following theorem.

\begin{theorem}
  There is a robust online algorithm for the dynamic $d$-dimensional
  Hyperrectangle Packing problem that is $1{.}69103^{d-1}+\epsilon$-competitive
  and has amortized migration factor $O(1/\epsilon)$.
\end{theorem}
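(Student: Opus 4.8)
The plan is to verify the three ingredients required by \autoref{thm:main} and then simply invoke it. The statement to prove is that there is a robust online algorithm for dynamic $d$-Dimensional Hyperrectangle Packing that is $1{.}69103^{d-1}+\epsilon$-competitive with amortized migration factor $O(1/\epsilon)$. By \autoref{thm:main}, it suffices to exhibit (i) that the problem is space related, (ii) a flexible online algorithm with some constant ratio $\beta$, and (iii) an offline $(\gamma+\epsilon')$-approximation for a suitable constant $\gamma$. Concretely, I would take $\gamma = 1{.}69103^{d-1}$, use the offline algorithm of Caprara~\cite{caprara2002harmonic}, and use the algorithm $\alg{A}_{d\text{-}D}$ together with its ratio from \autoref{thm:d:flexible}.

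For ingredient (i): since the hyperrectangles are packed non-overlappingly into unit hypercubes of volume $1$, any packing using $k$ bins contains total item volume at most $k$, so $\OPT(I)\geq \Vol(I)$ and the problem is space related. For ingredient (ii): \autoref{thm:d:flexible} already establishes that $\alg{A}_{d\text{-}D}$ is a flexible online algorithm with ratio $\beta=\frac{2^{2d}-3\cdot 2^{d}+1}{2^{2d}(2^{d}-1)}$, which is a constant depending only on the fixed dimension $d$; one should note this constant is indeed positive (the numerator $2^{2d}-3\cdot 2^{d}+1$ is positive for $d\geq 2$), so the hypothesis of \autoref{thm:main} is met. For ingredient (iii): Caprara's harmonic-based algorithm~\cite{caprara2002harmonic} gives an asymptotic $1{.}69103^{d-1}$-approximation, i.e.\ it fits the required form $\alg{A}_{\mathrm{off}}(I)\leq \gamma\cdot\OPT(I)+c$ with $\gamma=1{.}69103^{d-1}$; combined with an arbitrarily small slack this is a $(\gamma+\epsilon)$-approximation as required.

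With all three ingredients in hand, I would instantiate $\alg{ALG}(\alg{A}_{d\text{-}D},\alg{A}_{\mathrm{off}})$ from Listing~\ref{list:8-6} and apply \autoref{thm:main}. The theorem yields an algorithm that is $(\gamma+O(1)\beta\epsilon)$-competitive with amortized migration factor $O(1/\epsilon)$; since $\beta$ is a constant (depending on $d$), rescaling $\epsilon$ by the constant factor $O(\beta)$ turns this into a $(1{.}69103^{d-1}+\epsilon)$-competitive algorithm with migration $O(1/\epsilon)$, which is exactly the claim, and it handles departures because \autoref{thm:main} does. The one point that needs a sentence of care—rather than a genuine obstacle—is the reparametrization of $\epsilon$: the competitive ratio delivered by \autoref{thm:main} is $\gamma + O(1)\cdot\beta\epsilon$, so to get additive error exactly $\epsilon$ one runs the framework with parameter $\epsilon' = \Theta(\epsilon/\beta)$, which only changes the migration factor by the constant $\Theta(\beta)$ and hence leaves it $O(1/\epsilon)$. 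There is no hard part beyond checking these routine conditions; the real work was already done in \autoref{thm:main} and \autoref{thm:d:flexible}.
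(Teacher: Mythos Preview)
Your proposal is correct and follows essentially the same approach as the paper: the paper simply cites Caprara's offline $1{.}69103^{d-1}$-approximation, plugs in the flexible ratio from \autoref{thm:d:flexible}, and invokes \autoref{thm:main}. Your additional remarks on space-relatedness and the $\epsilon$-rescaling are routine checks that the paper leaves implicit.
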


\subsubsection{Hypercube Packing}
If every hyperrectangle is in fact a hypercube, we obtain a substantially easier
problem, called \emph{$d$-Dimensional Hypercube Packing}. 
The resulting algorithm is also much easier to describe. 
Here, a slot of class $j$ is a hypercube of side length $\nicefrac{1}{2^{j-1}}$.
Putting an item of class $j$ into a slot of class $j$ covers at least a fraction
of $\nicefrac{1}{2^{d}}$ of the total volume of the slot. 
Hence, slots are either empty or closed and do not need to be reserved. 
Every $d$-dimensional hypercube with side length $s$ can be divided into $2^d$
equal sized hypercubes with side length $\nicefrac{s}{2}$. 
The slot assignment happens like above: 
We start with single empty active bin containing $2^{d}$ empty slots of class
$2$.
An item of class $1$ is simply put into its own bin. 
Whenever an item $i$ of class $j\geq 2$ arrives, we first try to find an empty
slot of class $j$. 
If any such slot exists, put $i$ into it and close the slot. 
If no slot of class $j' \leq j$ exists, we close the bin and open a new bin to
insert $i$ into. 
If a slot of class $j' < j$ exists, we split the smallest such slot that is
still larger than the required slot size until an empty slot of size $j$ is
created.

We call this algorithm $\alg{A}_{d\text{-hyper}}$ and again make it
flexible by simply opening new bins. 
Similar to the two-dimensional case, we can show that there can't be too many
empty slots of the same class. The proof is just an adaption of the proof of
\autoref{lemma:at-most-three}. 

\begin{lemma}
  \label{lemma:at-most-2d-minus}
  For each $j\geq 1$, every bin created by $\alg{A}_{d\text{-hyper}}$ contains at most $2^{d}-1$ empty slots of class $j$. 
\end{lemma}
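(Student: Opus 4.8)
The plan is to mirror the proof of \autoref{lemma:at-most-three}, which establishes the same bound in the two-dimensional case, and carry the argument through verbatim after replacing the branching factor $4$ by $2^{d}$. First I would dispose of the base case $j=1$: an item of class $1$ is placed into its own bin, so such a bin contains at most a single slot of class $1$, which is certainly at most $2^{d}-1$.

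For $j\geq 2$, I would argue by contradiction. Suppose that at some point $\alg{A}_{d\text{-hyper}}$ produces a bin with $2^{d}$ or more empty slots of some common class $j^{*}$. Consider the arrival of an item $i$ of class $j$ that first causes the number of empty slots of class $j^{*}$ to jump from at most $2^{d}-1$ to at least $2^{d}$. Since the only way the count of empty slots of class $j^{*}$ can strictly increase is by splitting some larger slot of class $j' < j^{*}$, such a split must have occurred during the handling of $i$; in particular $j^{*}\leq j$. When the algorithm splits a slot of class $j'$ and recurses down, at each intermediate level $j'+1, j'+2, \ldots, j^{*}-1$ it creates $2^{d}$ new slots but immediately descends into one of them, leaving $2^{d}-1$ empty new slots at that level; at level $j^{*}$ it creates $2^{d}$ slots, one of which is either used directly (if $j^{*}=j$) or split further, so only $2^{d}-1$ genuinely new empty slots of class $j^{*}$ appear. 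Therefore, for the post-insertion count to reach $2^{d}$, there must already have been at least one empty slot of class $j^{*}$ present before the split. But then the algorithm, which always picks a non-closed slot of the largest admissible class, would have chosen that pre-existing slot of class $j^{*}$ rather than splitting the slot of class $j' < j^{*}$ — a contradiction.

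The only genuinely new point compared to the two-dimensional case is the bookkeeping in the splitting step: one must check that recursively subdividing a slot of class $j'$ down to class $j^{*}$ adds exactly $(2^{d}-1)$ empty slots of class $j^{*}$ and never more, regardless of how deep the recursion goes. This is immediate from the fact that each split of a single slot yields $2^{d}$ children, exactly one of which is consumed (by a further split or by the item), but I would state it explicitly since it is the crux of the counting. I expect no real obstacle here — the proof is a direct, essentially syntactic, adaptation of \autoref{lemma:at-most-three}, which is presumably why the excerpt remarks that it ``is just an adaption of the proof of \autoref{lemma:at-most-three}.''
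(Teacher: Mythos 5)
Your proof is correct and follows essentially the same route as the paper's own argument: the same base case for $j=1$, the same contradiction setup for $j\geq 2$, the same count of $2^{d}-1$ newly created empty slots of class $j^{*}$ per split cascade, and the same observation that a pre-existing slot of class $j^{*}$ would have been chosen (or split) in preference to one of class $j'<j^{*}$. The extra bookkeeping remark you flag about the recursive subdivision is exactly the step the paper also relies on, so there is nothing further to add.
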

\begin{proof}
  For $j=1$, every bin contains at most a single slot of class $1$.
  
  For $j\geq 2$, assume that this is not the case and our algorithm may generate
  a slot assignment with $2^{d}$ or more slots of the same class $j^{*}$. 
  Consider some arriving item $i$ of class $j$ such that before the arrival of
  $i$ there are at most $2^{d}-1$ empty slots of class $j^{*}$ and afterwards,
  there are $2^{d}$ or more of these slots. 
  Since the number of empty slots of class $j^{*}$ increased, our algorithm must
  have split a slot of class $j' < j^{*}$. 
  Note that this will only happen if $j^{*}\leq j$. 
  Splitting this slot of class $j'$ over and over will create $2^{d}-1$ slots of
  class $j'+1,j'+2,\ldots,j^{*}-1$. 
  Finally, $2^{d}$ slots of class $j^{*}$ are created. 
  One of this slots will either be directly used for $i$ (if $j^{*}=j$) or will
  be split into smaller slots. 
  Hence, only $2^{d}-1$ new slots of class $j^{*}$ are created. 
  By assumption, the insertion of $i$ leads to at least $2^{d}$ slots of class
  $j^{*}$. 
  Hence, there must have been a slot of class $j^{*}$ before. 
  But, by definition of the algorithm, we would rather have chosen this slot
  instead of the slot of class $j'$, as $j' < j^{*}$. 
  Hence, this is a contradiction. 
\end{proof}

\begin{theorem}
  The proposed algorithm $\alg{A}_{d\text{-hyper}}$ for $d$-Dimensional
  Hypercube Packing is a flexible online algorithm with ratio
  $\frac{2^{2d}}{2^d-1}$.
\end{theorem}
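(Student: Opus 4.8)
The plan is to follow the proof of \autoref{thm:2d:flexible}, exploiting that the hypercube setting is strictly simpler: slots are never \emph{reserved}, since placing an item of class $j$ into a slot of class $j$ already covers a $\nicefrac{1}{2^d}$-fraction of that slot's volume. Hence every non-empty slot is immediately a closed slot with coverage at least $\nicefrac{1}{2^d}$, and the term accounting for reserved slots in the $2$-dimensional proof disappears entirely. Flexibility itself is immediate: $\alg{A}_{d\text{-hyper}}$ never touches the bins of the previous solution $S$ and only opens fresh bins for the newly arrived items, so the produced solution agrees with $S$ on $I_t\cap I_{t'}$. It therefore suffices to prove the volume bound; applied to the newly opened bins — which contain only the items of $I_{t'}\setminus I_t$ — this bound also gives the ``ratio $\beta$'' form $\alg{A}_{d\text{-hyper}}(I_{t'},S)\le \costs(S)+\beta[\Vol(I_{t'})-\Vol(I_t)]+c$ required of a flexible algorithm with ratio $\beta=\frac{2^{2d}}{2^d-1}$.

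For the volume bound I would estimate the wasted space in a closed bin. A closed main bin consists only of closed slots (each $\geq\nicefrac{1}{2^d}$ filled) together with empty slots. Crucially, a closed main bin can contain no empty slot of class $2$: such a bin starts as $2^d$ empty class-$2$ slots, never receives a class-$1$ slot, and is only declared closed when an arriving item of class $j\geq 2$ finds no slot of class $j'\leq j$; hence any surviving empty slot must have class $\geq 3$ and thus individual volume at most $(2^d)^{-2}$. By \autoref{lemma:at-most-2d-minus} there are at most $2^d-1$ empty slots of each class, so the total empty volume in a closed bin is at most
\[
  (2^d-1)\sum_{k\geq 2}\frac{1}{(2^d)^k}=(2^d-1)\cdot\frac{(2^d)^{-2}}{1-2^{-d}}=\frac{1}{2^d}.
\]
Therefore a fraction $1-\nicefrac{1}{2^d}=\frac{2^d-1}{2^d}$ of every closed bin lies inside closed slots, and since each closed slot is at least $\nicefrac{1}{2^d}$ filled, every closed bin is covered to at least $\frac{2^d-1}{2^{2d}}$.

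It remains to handle the open bins. Each class-$1$ item occupies its own bin, which is covered to at least $\nicefrac{1}{2^d}\geq\frac{2^d-1}{2^{2d}}$ since a class-$1$ hypercube has side length $>\nicefrac{1}{2}$, and at most one main bin is ever open. Consequently, on an instance of total volume $\Vol(I)$ the algorithm uses at most $\frac{2^{2d}}{2^d-1}\Vol(I)+1$ bins, and applying exactly this count to the bins freshly opened on top of $S$ gives the ratio-$\beta$ inequality with $c=1$. The only point needing a little care — the ``main obstacle'', such as it is — is the claim that a closed main bin retains no empty class-$2$ slot; this is precisely where the closing rule and the ``descend from the largest available slot'' splitting discipline underlying \autoref{lemma:at-most-2d-minus} are invoked, and everything else is the geometric-series bookkeeping above.
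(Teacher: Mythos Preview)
Your proposal is correct and follows essentially the same approach as the paper's own proof: both argue flexibility from the fact that previously used bins are untouched, bound the empty volume in a closed main bin via \autoref{lemma:at-most-2d-minus} and the geometric series $(2^d-1)\sum_{k\geq 2}(2^d)^{-k}=\nicefrac{1}{2^d}$ after noting that no empty class-$2$ slot survives, and conclude a coverage of $\frac{2^d-1}{2^{2d}}$ per closed bin plus one open bin. Your write-up is slightly more explicit about why a closed bin cannot retain an empty class-$2$ slot and about deriving the ratio-$\beta$ inequality from the bin count, but the argument is the same.
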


\begin{proof}
  The flexibility follows from the fact that we do not touch the already used
  bins.
  
  Just like in the two dimensional case we analyze the free space in a closed
  bin. As observed above, every closed slot has at least an overall area of
  $\nicefrac{1}{2^{d}}$ of its size covered. We now take a look at the empty
  slots in a closed bin. 
  
  In bins that were closed due to the insertion of an item of size class $1$, at
  least a fraction of $\nicefrac{1}{2^{d}}$ of the bins area is covered. 
  Hence, consider a bin that received items with side length less or equal than
  $\nicefrac{1}{2}$. 
  As observed above, every assigned slot in this bin is either closed or empty.
  \autoref{lemma:at-most-2d-minus} guarantees that we have at most $2^{d}-1$
  empty slots of any given class. 
  Since there can not be any reserved slots, we can already calculate how much
  space we lose in total, by looking at empty slots.
  Note again that all slots of class $2$  will be used by some items
  before the overall bin gets closed. 
  Therefore the total volume assigned to empty slots can be at most
  \begin{align*}
  &(2^d-1) \sum_{i=2}^{\infty}\frac{1}{\bigl( 2^{d} \bigr)^{i}} =
  (2^d-1)[\frac{1}{1-2^{-d}}-1-\frac{1}{2^{d}}] =
    (2^{d}-1)[\frac{2^{d}}{2^{d}-1}-1-\frac{1}{2^{d}}]=\\
    &2^{d}-2^{d}+1-1+\frac{1}{2^{d}}=\frac{1}{2^{d}}.
  \end{align*}
  
  We get that at least $1- \nicefrac{1}{2^d} = \nicefrac{2^d-1}{2^d}$ of the
  total area of a closed bins is used by closed slots. As each of these slots
  covers at least an area of $\nicefrac{1}{2^{d}}$ of its size, at least
  a total volume of $\nicefrac{2^d-1}{2^d}\cdot  \nicefrac{1}{2^d}
  =\nicefrac{2^d-1}{2^{2d}} $ of a closed bin is covered. We also have to account that our algorithm may keep one open bin, to place items. Together we can conclude that our algorithm uses at most $\nicefrac{2^{2d}}{2^d-1}\Vol(I) +1$ bins in total.
\end{proof}

Finally, we can use the offline APTAS from Bansal \emph{et
  al.}~\cite{bansal2006binpacking}. Using \autoref{thm:main} with $\gamma=1$ and
$\beta=\nicefrac{2^{2d}}{2^d-1}$ allows the conclusion of the following theorem.

\begin{theorem}
  For every $d\geq 2$, there is a robust online algorithm for the  dynamic
  $d$-Dimensional Hypercube Packing problem that is
  $1+\epsilon$-competitive and has amortized migration factor $O(1/\epsilon)$. 
\end{theorem}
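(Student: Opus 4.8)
The plan is to instantiate \autoref{thm:main} with the three ingredients required by our framework: space relatedness of the problem, a flexible online algorithm with constant ratio, and an offline approximation algorithm. First I would observe that $d$-Dimensional Hypercube Packing is space related: since hypercubes are packed non-overlappingly into unit-volume bins, we have $\OPT(I)\geq \Vol(I)$ for every instance $I$, exactly as in the $2$-dimensional case. Second, the flexible online algorithm is $\alg{A}_{d\text{-hyper}}$, which the preceding theorem shows to be flexible with ratio $\beta=\nicefrac{2^{2d}}{2^d-1}$; note this $\beta$ is a constant once $d$ is fixed, so the hypotheses of \autoref{thm:main} are met.

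Third, for the offline piece I would invoke the asymptotic polynomial-time approximation scheme (APTAS) of Bansal \emph{et al.}~\cite{bansal2006binpacking} for $d$-Dimensional Hypercube Packing. For any target error $\epsilon'>0$ this yields a $(1+\epsilon')$-approximation, so we may take $\gamma=1$ in \autoref{thm:main}. With $\gamma=1$ and $\beta=\nicefrac{2^{2d}}{2^d-1}$ constant, \autoref{thm:main} produces an online algorithm $\alg{ALG}(\alg{A}_{d\text{-hyper}},\alg{A}_{\mathrm{off}})$ that is $(1+O(1)\cdot\beta\epsilon')$-competitive with amortized migration factor $O(1/\epsilon')$, and which handles departures since the framework is designed for the dynamic setting.

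Finally I would do the standard reparametrization: given the desired accuracy $\epsilon>0$, choose $\epsilon'=\Theta(\epsilon/\beta)$ so that $O(1)\cdot\beta\epsilon'\leq\epsilon$; since $\beta$ depends only on $d$ (which is fixed), this only changes the hidden constants in the migration bound, leaving it $O(1/\epsilon)$. This gives a robust online algorithm for dynamic $d$-Dimensional Hypercube Packing that is $(1+\epsilon)$-competitive with amortized migration factor $O(1/\epsilon)$, as claimed. The main thing to be careful about is that $\beta$ grows with $d$, so the statement is for each fixed $d\geq 2$ rather than uniformly in $d$; but there is no real obstacle here, as the theorem is a direct corollary of \autoref{thm:main} once the three ingredients are in place.
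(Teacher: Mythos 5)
Your proposal is correct and follows exactly the paper's route: instantiate \autoref{thm:main} with $\gamma=1$ (via the APTAS of Bansal et al.) and $\beta=\nicefrac{2^{2d}}{2^d-1}$ from the flexibility theorem for $\alg{A}_{d\text{-hyper}}$, using space relatedness of the problem. The explicit reparametrization $\epsilon'=\Theta(\epsilon/\beta)$ you spell out is left implicit in the paper but is the standard step, and your remark that the result holds per fixed $d$ rather than uniformly in $d$ is accurate.
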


In contrast, the best known online algorithm with migration for the
$d$-Dimensional Hypercube Packing is due to Epstein and
Levin~\cite{epstein2013cubepacking}. 
It is also $1+\epsilon$-competitive, but can only handle the static case and has
migration factor $(1/\epsilon)^{\Omega(d)}$. 
Note however that they use worst-case migration, i.\,e.~they are not allowed to
repack the complete instance every once in a while but need to make slight
adaptions carefully throughout the run of the algorithm.

\section{$d$-Dimensional Strip Packing}
In the $d$-dimensional version of online strip packing, called the
\emph{$d$-Dimensional Strip Packing} problem, we are given a $d$-dimensional
cuboid that has size $1$ in $d-1$ dimensions and infinite size in the last
dimension. 
Even in the general case with more dimensions we will consider this last
dimension as \emph{height}. 
At each point of time a $d$-dimensional hyperrectangle $r$ arrives. 
The size of its $i$th dimension is denoted by $r_{i}\in (0,1)$ and its volume is
$v(r)=\prod_{i=1}^{d}r_{i}$. 
The task is again to pack these cuboid with no intersection such that the height
is minimized. 
Like above we denote with $I_{t}$ the set of rectangles present at time $t$,with
$\Vol(I_{t})$ the total volume and with $\OPT(I_{t})$ the optimal height.
This version of Strip Packing is also space related, since the base of the
packing space has side lengths of $1$.

\begin{remark}
  The $d$-dimensional Strip Packing problem is space related.
\end{remark}
It leaves to show that there are respective online and offline algorithms for
our framework. 
For a flexible online algorithm we will generalize our above approach for the
$d$-dimensional case in a straight-forward way. 
A \emph{container} $c$ of type $\gamma_{i}$ is a $d$-dimensional hyperrectangle
with $c_{1}=c_{2}=\ldots=c_{d-1}=1$ and $c_{d}=2^{-i}$. 
For each $i\in \mathbb{Z}_{\geq 0}$, we will have at most one \emph{active}
container of type $\gamma_{i}$. 
For all other containers of this type~--~which we call \emph{closed}~--~we will
guarantee that at least a constant fraction of their volume is used by items. 
We assign a hyperrectangle $r$ of height $r_{d}\in (2^{-i-1},2^{-i}]$ to a
container of type $\gamma_{i}$. 
We then treat these hyperrectangles as $d-1$-dimensional hyperrectangles by
projecting to its first $d-1$ coordinates and also treat the container as a
$d-1$-dimensional hypercube. 
We then pack the projected hyperrectangles into the projected hypercube with the
algorithm $\alg{A}_{(d-1)\text{-}D}$ described in
Section~\ref{sec:d:binpacking}. 
Theorem~\ref{thm:d:flexible} then guarantees that a fraction of
$\frac{2^{2(d-1)}(2^{d-1}-1)}{2^{2(d-1)}-3\cdot 2^{d-1}+1}$ of each projected
container is filled. 
As the original, non-projected containers have height $2^{-i}$ and every
non-projected hyperrectangle has height at least $2^{-i-1}$, we lose a factor of
$2$ here. 
The resulting algorithm, called $\alg{A}_{d-SP}$, is thus a flexible online
algorithm and we obtain the following generalization of
Theorem~\ref{thm:strip:flexible}.

\begin{theorem}
   \label{thm:strip:d:flexible}
   The presented algorithm $\alg{A}_{d-SP}$ is a flexible online algorithm for
   $d$-dimensional Strip
   Packing with ratio $O(\frac{2^{2d}-3\cdot
  2^{d}+1}{2^{2d}(2^{d}-1)})$. 
 \end{theorem}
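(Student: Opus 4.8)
The plan is to follow the template of the proof of Theorem~\ref{thm:strip:flexible}, with the one-dimensional \textsf{FirstFit} step that filled each shelf there replaced by the $(d-1)$-dimensional bin-packing routine $\alg{A}_{(d-1)\text{-}D}$ of Section~\ref{sec:d:binpacking}, and with a factor $2$ charged for rounding the heights. First I would dispatch flexibility: on input $(I_{t'},S)$ the algorithm never modifies the containers belonging to the old packing $S$ and only stacks freshly opened containers of types $\gamma_{0},\gamma_{1},\ldots$ on top of $S$, so the produced packing $S'$ agrees with $S$ on $I_{t}\cap I_{t'}$, and its height equals $h(S)$ plus the total height of the newly created containers.

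Next, the ratio. Let $J=I_{t'}\setminus I_{t}$ be the set of items handled by the current invocation. The items assigned to type $\gamma_{i}$ are exactly those of height in $(2^{-i-1},2^{-i}]$; after projecting away the height coordinate they become $(d-1)$-dimensional rectangles fed to $\alg{A}_{(d-1)\text{-}D}$, whose ``bins'' are precisely the projected containers of type $\gamma_{i}$, and since a new container of that type is opened only when $\alg{A}_{(d-1)\text{-}D}$ closes one, each closed container of type $\gamma_{i}$ corresponds to a closed bin of $\alg{A}_{(d-1)\text{-}D}$. Hence the coverage guarantee read off from the proof of Theorem~\ref{thm:d:flexible} (applied with $d-1$ in place of $d$), namely that a closed $(d-1)$-dimensional bin has at least a $\rho$-fraction of its volume covered with $\rho=\frac{2^{2(d-1)}-3\cdot2^{d-1}+1}{2^{2(d-1)}(2^{d-1}-1)}$, says that a $\rho$-fraction of the base area of each closed container $c$ of type $\gamma_{i}$ is covered by projections; since every item placed in $c$ has true height $>2^{-i-1}=\frac12 h(c)$ and $v(c)=h(c)$, at least a $\rho/2$-fraction of $v(c)$ is occupied, so (the item sets of distinct closed containers being disjoint subsets of $J$) the total height of all closed containers is at most $\frac{2}{\rho}\Vol(J)$. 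The active containers are handled separately: there are only $O(1)$ of them per type $\gamma_{i}$ and their heights $2^{-i}$ form a convergent geometric series, so they contribute only $\sum_{i\geq0}O(1)\cdot2^{-i}=O(1)$ to the height.

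Combining the two contributions gives $h(S')\leq h(S)+\frac{2}{\rho}\Vol(J)+c$ for a constant $c=c(d)$, i.e.\ $\alg{A}_{d-SP}(I_{t'},S)\leq\costs(S)+\beta[\Vol(I_{t'})-\Vol(I_{t})]+c$ with $\beta=\frac{2}{\rho}$, which is of the order claimed in the theorem; specialising to $t=1$ and a trivial one-item $S$ recovers the space-related bound, hence the asserted ratio. I expect the only genuinely delicate point to be the bookkeeping over the infinitely many container types $\gamma_{i}$: one has to be sure that the active (and hence possibly underfilled) containers, although infinite in number, still cost only an additive constant, which is exactly where the $\sum_{i}2^{-i}$ summability enters; the matching between ``closed container of type $\gamma_{i}$'' and ``closed bin of $\alg{A}_{(d-1)\text{-}D}$'' and the factor-$2$ height loss are the other places where a little care is needed, but both are routine once spelled out.
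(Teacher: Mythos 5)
Your argument is correct and matches the paper's (only sketched) justification of this theorem: flexibility by stacking the newly opened containers on top of $S$, a per-container coverage guarantee obtained by invoking the closed-bin coverage bound behind \autoref{thm:d:flexible} in dimension $d-1$ and paying a factor $2$ for rounding the heights, and an additive $O(1)$ (depending on $d$) for the boundedly many active containers per type, whose heights form a convergent geometric series. Note that your $\beta=2/\rho$ is the reciprocal of the fraction printed in the theorem statement (and with $d-1$ in place of $d$); the stated ratio inherits an inversion already present in \autoref{thm:d:flexible}, where the \emph{covered fraction} rather than its reciprocal appears as the ``ratio'' (compare with the $\nicefrac{48}{5}$ of \autoref{thm:2d:flexible}), so your value is the one that actually makes the flexibility inequality true.
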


 To the best of our knowledge, there is no work that explicitly deals with the
 construction of approximation algorithms for the $d$-dimensional Strip Packing
 problem. As shown above, any such result can be used in the context of our
 framework to obtain a robust online algorithm with corresponding competitive
 ratio. 

\subsection{Hypercube Strip Packing}
In the following we will restrict this problem to the case where each item is a
hypercube, so we have that each entry $r_{i}$ is the same. 
In this version, that is also known as \emph{online Hypercube Strip Packing}, we
will denote this value with $s(r)$ and call it the \emph{side length} of the
hypercube $r$. 
The size of this hypercube is given by its volume, which is $v(r) =s(r)^d$. 
Note that in this variant allowing rotations by 90 degree makes no difference.

In fact the approach becomes a little easier due to our restriction to
hypercubes. 
Again we define containers: 
For $i\in\mathbb{N}_{\geq 1}$, a container $c$ of type $\gamma_{i}$ has size $1$
in every dimension except the height and height of $h(c)=2^{-i+1}$. 
The idea is the same as before: 
Keep at most one \emph{open} container of each size, while we make sure at least
$2^{-d}$ of the total area of each \emph{closed} bin is covered. 

The assignment of items to containers works the same as well: 
Whenever a hypercube $c_t$ with side length $s(c_t) \in (2^{-i},2^{-i+1}]$
arrives , we will try to pack it into an open container of type $\gamma_i$.
If there is no open container of this type or it does not fit, we open a new one
and close the old one.
In order to place these hypercubes inside of the container we subdivide every
container into equal sized cubic slots.
Given a container of type $\gamma_i$, we further divide it into hypercubes with
side length $2^{-i+1}$, which we call \emph{slots}. 
Now when a hypercube $c_t$ with $s(c_t)> 2^{-i}$ gets assigned to a container,
we pack it into one of the free slots, such that it does not intersect any other
slots. 

Overall this yields a space related online algorithm. The flexibility follows like in the 2-D case from the fact, that we can simply extend any existing packing.

\begin{theorem}
   \label{thm:hypercubestrip:flexible}
   The presented algorithm $\alg{A}_{SP\text{-hyper}}$ is a flexible online algorithm for $d$-dimensional Hypercube Strip
   Packing restricted to hypercubes with ratio $2^{d}$.
 \end{theorem}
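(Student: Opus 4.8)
The plan is to mirror the analysis of the two-dimensional shelf algorithm in \autoref{thm:strip:flexible}, replacing the per-type volume bound of \autoref{lem:strip:volume_bound} by an even simpler one tailored to the hypercube case. \emph{Flexibility} itself is immediate: on input $I_{t'}$ together with a previous packing $S$ of $I_{t}$ (recall that for flexibility $I$ is static, so $I_{t}\subseteq I_{t'}$), the algorithm leaves $S$ untouched and only stacks freshly created containers on top of it, so $S'(i)=S(i)$ for all $i\in I_{t}\cap I_{t'}$; moreover the resulting height is exactly $\costs(S)$ plus the total height of the new containers. It therefore remains to bound $\sum_{j}h(c_{j})$, where $c_{1},\ldots,c_{k}$ are the containers built for the set $J=I_{t'}\setminus I_{t}$ of newly arriving hypercubes, by $2^{d}\Vol(J)+O(1)$.

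First I would record the geometry of a single container. A container $c$ of type $\gamma_{i}$ has its $d-1$ base sides equal to $1$ and height $2^{-i+1}$, and it is subdivided into cubic slots of side length $2^{-i+1}$; hence it contains exactly $N_{i}=(2^{i-1})^{d-1}$ slots, and one checks $N_{i}\cdot(2^{-i+1})^{d}=2^{-i+1}=h(c)$, i.e.\ the slots tile the container. Each slot holds a single hypercube of side length $s\in(2^{-i},2^{-i+1}]$, so an occupied slot has at least a $\bigl((2^{-i})/(2^{-i+1})\bigr)^{d}=2^{-d}$ fraction of its volume covered. Crucially, the algorithm closes a container of type $\gamma_{i}$ only when a new hypercube of the matching size class fails to fit, which---since such an item always fits into any free slot of that container---happens exactly when all $N_{i}$ slots are occupied. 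Consequently every \emph{closed} container $c$ of type $\gamma_{i}$ has at least $N_{i}\cdot 2^{-d}(2^{-i+1})^{d}=2^{-d}h(c)$ of its volume covered by items.

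Next I would aggregate over container types. Since at most one container of each type $\gamma_{i}$ is active at time $t'$, the usual bookkeeping (as in \autoref{lem:strip:volume_bound}) gives, for the set $C_{i}$ of containers of type $\gamma_{i}$, the bound $\sum_{c\in C_{i}}\sum_{r\in c}v(r)\ge 2^{-d}\bigl(\sum_{c\in C_{i}}h(c)\bigr)-2^{-i+1}$: the closed containers in $C_{i}$ already account for $2^{-d}$ times their total height, and subtracting the single active container's height $2^{-i+1}$ absorbs the one unaccounted container (this also covers the degenerate case $|C_{i}|=1$, where the right-hand side is negative while the left-hand side is nonnegative). Rearranging and summing over $i\ge 1$ yields
\begin{align*}
\sum_{j=1}^{k}h(c_{j})=\sum_{i\ge 1}\sum_{c\in C_{i}}h(c)\le 2^{d}\sum_{i\ge 1}\sum_{c\in C_{i}}\sum_{r\in c}v(r)+2^{d}\sum_{i\ge 1}2^{-i+1}=2^{d}\Vol(J)+2^{d+1}.
\end{align*}
Plugging this into the height of the combined packing gives $\alg{A}_{SP\text{-hyper}}(I_{t'},S)\le\costs(S)+2^{d}\bigl(\Vol(I_{t'})-\Vol(I_{t})\bigr)+2^{d+1}$, which is precisely flexibility with ratio $2^{d}$ and additive constant $c=2^{d+1}=O(1)$.

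I expect the only place that needs care is the claim that a closed container is completely filled, i.e.\ that ``does not fit'' for the hypercube variant is equivalent to ``no free slot'': this relies on the fact that every admissible item for type $\gamma_{i}$ has side length at most $2^{-i+1}$, the exact slot side length, so a free slot always suffices to place it. Once this is settled, the per-type volume bound and the geometric-series summation are routine, exactly paralleling the two-dimensional proof; in particular, no analogue of the reserved-slot loss of \autoref{thm:d:flexible} arises here, since in the hypercube variant each slot is filled by a single item rather than being reserved and filled incrementally.
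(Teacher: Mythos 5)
Your proof is correct and follows essentially the same route as the paper's: flexibility by stacking fresh containers on top of $S$, the observation that a container is closed only when all of its $(2^{i-1})^{d-1}$ cubic slots are occupied so that a closed container of type $\gamma_i$ has at least a $2^{-d}$ fraction of its volume covered, and an $O(1)$ additive loss for the at most one open container per type. The only cosmetic difference is the bookkeeping: you fold the open container into a per-type bound \`a la \autoref{lem:strip:volume_bound} and get additive constant $2^{d+1}$, while the paper separates closed from open containers and bounds the total height of the open ones directly by $\sum_{i\ge 1}2^{-i+1}\le 2$; both constants are $O(1)$ and either is fine for the definition of flexibility with ratio $2^d$.
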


\begin{proof}
The flexibility follows again from the properties of the Strip Packing problem, given a packing of some items $S$, we can simply ignore that packing and put newly arriving items on top of the existing packing.

First off we show that each closed container has at least $2^{-d}$ of its total volume covered. Let $c$ be a closed container of type $\gamma_i$ with volume  $v(c) = h(c)$, since the size in every dimension except height is exactly $1$. Note that we subdivided $c$ into multiple cubic slots with total volume $h(c)^{d} = 2^{d(-i+1)}$. Since $c$ is closed each slot got assigned an item $r$ with volume 
$v(r) = s(r)^d > 2^{-i \cdot d} = 2^{-d} \cdot  2^{-d \cdot i+d} = 2^{-d} \cdot h(c)^{d}$. Since the slots divide the container completely and every slot has $2^{-d}$ of its total volume covered, 
we can conclude that $\sum_{r\in c}v(r)\geq (2^{-d}) h(c)$.

Now we will look at the height of a flexibly created solution. Let $t$ be a point of time, $S$ be the previous packing we extended and let $c_1,..,c_k$ be the containers created up to time $t$. Let $O$ be the set of open containers and $L$ the set of closed containers. Let $S_t$ be the result of our algorithm at time $t$. Let $l$ be the type of the smallest open container in our current solution. We then can observe for the height of open containers that:
$\sum_{c\in L}{h(c)} \le \sum_{i=1}^{l}{2^{-i+1}} = \sum_{i=0}^{l}{2^{-i}} \le 2 $.

Finally we have for the total height of our solution that 
\begin{align*}
&h(S_t) = h(S) + \sum_{c\in L}{h(c)}+ \sum_{c\in O}{h(c)} \leq h(S) + \sum_{c\in L}{ 2^{d} \sum_{r\in c}{v(r)}}+ \sum_{c\in O}{h(c)} \le \\
& h(S) + 2^{d} \sum_{c\in L}{  \sum_{r\in c}{v(r)}}+ 2 \le  h(S) + 2^{d} \Vol(I_{t})+ 2\qedhere
\end{align*}
\end{proof}

With this we know our problem is space-related and has an appropriate space related online algorithm. As for the offline algorithm, we will use a result from Harren, who gave an APTAS for the Hypercube Strip Packing problem ~\cite{harren2009hypercube}. By using
 \autoref{thm:main} with $\gamma=1$ and $\beta=2^d$ we get the following theorem.

\begin{theorem}
  There is a robust online algorithm for the dynamic $d$-dimensional Hypercube Strip Packing problem
  that is $1+\epsilon$-competitive and has amortized
  migration factor $O(1/\epsilon)$.
\end{theorem}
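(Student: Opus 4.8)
The plan is to instantiate the general framework of \autoref{thm:main}, for which exactly three ingredients are required: the problem must be space related, one must supply a flexible online algorithm of constant ratio, and one must supply an offline approximation algorithm. The first ingredient is already established by the Remark above: the base of the packing region is a unit $(d-1)$-dimensional cube and hypercubes may not overlap, so $\OPT(I_{t})\geq \Vol(I_{t})$. The second ingredient is the algorithm $\alg{A}_{SP\text{-hyper}}$ of \autoref{thm:hypercubestrip:flexible}, which is flexible with ratio $\beta=2^{d}$; since the dimension $d$ is treated as a fixed constant throughout this subsection, $\beta=O(1)$ as the framework demands.

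For the third ingredient I would invoke the APTAS of Harren~\cite{harren2009hypercube} for Hypercube Strip Packing, which produces in polynomial time a packing of height at most $(1+\epsilon)\OPT(I)+c$ for a constant $c$ depending only on $\epsilon$. In the terminology of the framework this is precisely a $(\gamma+\epsilon)$-approximation with $\gamma=1$. The one point that genuinely needs checking is that Harren's guarantee is of this \emph{asymptotic} form, i.e.\ that the additive term is a constant rather than something growing with the instance; this is really the only obstacle, and everything else is bookkeeping.

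With these pieces in hand, \autoref{thm:main} applied to $\alg{ALG}(\alg{A}_{SP\text{-hyper}},\alg{A}_{\mathrm{off}})$ immediately yields a robust algorithm that is $(1+O(1)\cdot\beta\epsilon)$-competitive with amortized migration factor $O(1/\epsilon)$. To reach the clean statement, one runs the framework with precision parameter $\epsilon' := \epsilon/(K\beta)$, where $K$ is the constant hidden in the $O(1)$; because $\beta=2^{d}$ is a fixed constant this gives competitive ratio $1+\epsilon$ and migration factor $O(1/\epsilon')=O(2^{d}/\epsilon)=O(1/\epsilon)$. The competitiveness and the migration bound thus follow mechanically once the asymptotic form of Harren's offline guarantee is confirmed, so that verification is where I expect essentially all of the (very modest) work to lie.
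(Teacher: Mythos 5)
Your proposal is correct and follows exactly the paper's route: verify space-relatedness, plug in the flexible algorithm $\alg{A}_{SP\text{-hyper}}$ with ratio $\beta=2^{d}$ from \autoref{thm:hypercubestrip:flexible}, use Harren's APTAS as the offline $(1+\epsilon)$-approximation, and apply \autoref{thm:main} with $\gamma=1$. Your explicit rescaling of $\epsilon$ to absorb the $O(1)\cdot\beta$ factor is only implicit in the paper, but it is the same argument.
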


\section{Vector Packing}
In the \emph{online $d$-dimensional Vector Packing} problem, at time $t$ either
a vector $w_{t}\in (\mathbb{Q}\cap [0,1])^{d}$ is inserted and needs to be
packed or is removed. 
The size $v(w_{t})$ of such a vector $w_{t}=(w[1],\ldots,w[d])$ is defined as
the average sum of its components, i.\,e.~$v(w_{t})=\sum_{j=1}^{d}w[j]/d$.
The goal is to pack these vectors into as few as possible bins as possible.
Here, a \emph{bin} $B$ is a subset of vector such that $\sum_{w\in B}w[j]\leq 1$
for $j=1,\ldots,d$.

To use our framework, we need the following ingredients:
\begin{enumerate}[label=\roman*)]
\item We need to show that the  problem is space related;
\item we need to construct a flexible online algorithm 
  with ratio $\beta$;
\item We need to construct an offline approximation algorithm. 
\end{enumerate}

As each bin can contain items of volume at most $1$, it is easy to see that
the $d$-dimensional Vector Packing problem is space related.
\begin{remark}
  The $d$-dimensional Vector Packing problem is space related.
\end{remark}

We will now present a flexible online algorithm with ratio $\beta=2d$ that is a
simple adaption of the well-known next fit online algorithm for bin packing.
Every bin will have an index to guarantee a linear ordering. 
Whenever a vector $w$ arrives, we first check whether $w$ can be packed into an
existing bin. 
If this is possible, we add $w$ to such a bin with minimal index.
If no such bin exists, we open a new bin containing $w$.
If we are given a previous packing $S$, we simply ignore the previous bins and
do not put any vector in them. 

\begin{theorem}
  \label{thm:vector:flexible}
  For every $d\geq 1$, the presented algorithm $\alg{A}_{VP}$ is a flexible
  online algorithm for $d$-dimensional Vector Packing with ratio $2d$. 
\end{theorem}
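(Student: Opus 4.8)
The plan is to check flexibility, which holds essentially by construction, and then to establish the quantitative estimate by a first-fit-style pairing argument generalised to $d$ coordinates. Flexibility itself is immediate: when $\alg{A}_{VP}$ is handed a previous packing $S$, it leaves every bin of $S$ untouched and only ever uses freshly opened bins, so $S'(i)=S(i)$ for every item $i$ common to the two instances.

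It thus remains to prove, for a static online instance $I$, times $t<t'\le|I|$, and a solution $S$ of $I_t$, the inequality $\alg{A}_{VP}(I_{t'},S)\le\costs(S)+2d\,[\Vol(I_{t'})-\Vol(I_t)]+c$. Write $N=I_{t'}\setminus I_t$ for the set of vectors inserted in $(t,t']$, so that $\Vol(N)=\Vol(I_{t'})-\Vol(I_t)$. Since the algorithm never reuses the old bins, every vector of $N$ lands in a newly opened bin; let $B_1,\dots,B_m$ be these bins in order of creation, so that $\alg{A}_{VP}(I_{t'},S)=\costs(S)+m$, and it suffices to show $m\le 2d\,\Vol(N)+1$.

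The heart of the argument is the estimate $\Vol(B_\ell)+\Vol(B_{\ell+1})>1/d$ for every $\ell\in\{1,\dots,m-1\}$. Indeed, let $w$ be the vector whose arrival triggered the opening of $B_{\ell+1}$; at that moment $w$ did not fit into $B_\ell$, so some coordinate $j$ satisfies $\sum_{v\in B_\ell}v[j]+w[j]>1$, where the first sum runs over the then-current contents of $B_\ell$. Because bin contents only grow over time and because $w$ belongs to $B_{\ell+1}$, which is disjoint from $B_\ell$, the inequality $\sum_{v\in B_\ell}v[j]+\sum_{v\in B_{\ell+1}}v[j]>1$ holds for the final contents of the two bins. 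Summing over all $d$ coordinates, with every term nonnegative, gives $\sum_{j=1}^{d}\bigl(\sum_{v\in B_\ell}v[j]+\sum_{v\in B_{\ell+1}}v[j]\bigr)>1$, which by the definition $v(w)=\frac{1}{d}\sum_{j}w[j]$ reads $d\,\Vol(B_\ell)+d\,\Vol(B_{\ell+1})>1$.

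To finish, I would sum this estimate over the disjoint pairs $(B_1,B_2),(B_3,B_4),\dots$: the contributions add to $\sum_{\ell=1}^{2\lfloor m/2\rfloor}\Vol(B_\ell)\le\Vol(N)$ on one side and to more than $\lfloor m/2\rfloor/d$ on the other, whence $\lfloor m/2\rfloor<d\,\Vol(N)$ and therefore $m\le 2\lfloor m/2\rfloor+1<2d\,\Vol(N)+1$; taking $c=1$ concludes. I do not expect a genuine obstacle here. The only points needing care are the monotonicity ("bin contents only grow") used to transfer the non-fitting inequality from the instant $B_{\ell+1}$ is opened to the final configuration, and the bookkeeping that the triggering vector $w$ genuinely lies in $B_{\ell+1}$, so that it is double-counted neither with $B_\ell$ nor across consecutive pairs.
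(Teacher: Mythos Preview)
Your proof is correct and follows essentially the same route as the paper: both arguments establish the key estimate $\Vol(B_\ell)+\Vol(B_{\ell+1})>1/d$ for consecutive new bins and then sum to bound $m$ by $2d\,\Vol(N)+1$. The only cosmetic difference is that you sum over disjoint pairs $(B_1,B_2),(B_3,B_4),\dots$ to get $\lfloor m/2\rfloor<d\,\Vol(N)$, whereas the paper sums over all $m-1$ overlapping consecutive pairs and observes each bin is counted at most twice; both reach the same bound with the same additive constant.
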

\begin{proof}
  The flexibility of $\alg{A}_{VP}$ follows directly due to the fact that we
  only build upon the existing packing. 
  We will now show that $\alg{A}_{VP}$ has ratio $2d$. 
  Let $t$ be any time point, $S$ be the previous packing we built upon, and
  $B_{1},\ldots,B_{k}$ be the bins opened by $\alg{A}_{VP}$. 
  Let $J$ be the set of vectors packed into $B_{1},\ldots,B_{k}$. 
  The current packing then uses $|S|+k$ bins, where $|S|$ denotes the number of
  bins used by the previous packing $S$.
  Consider two adjacent bins $B_{i}$ and $B_{i+1}$ for $i\in \{1,\ldots,k-1\}$.
  We have opened bin $B_{i+1}$, because there is some vector $w'\in B_{i+1}$ and
  some index $j\in \{1,\ldots,d\}$ such that $\bigl[ \sum_{w\in B_{i}} w[j]
  \bigr]+w'[j] > 1$.
  Denoting this index with $j=j(i)$, we have $\sum_{w\in B_{i}\cup
    B_{i+1}}w[j(i)]>1$ and thus $d\cdot \Vol(B_{i}\cup B_{i+1}) > 1$. 
  Hence, we can conclude
  \begin{align*}
    &k  < 1+\sum_{i=1}^{k-1}d\cdot \Vol(B_{i}\cup B_{i+1})  = 1+d\cdot \sum_{i=1}^{k-1}(\Vol(B_{i})+\Vol(B_{i+1}))=\\
    &1+d\cdot \bigl[  (\sum_{i=1}^{k-1}\Vol(B_{i}))+(\sum_{i=2}^{k}\Vol(B_{i}))\bigr] < 1+2d(\sum_{i=1}^{k}\Vol(B_{i}))=1+2d\Vol(J). \qedhere
  \end{align*}
\end{proof}
We have now shown the first two ingredients for our framework: 
the problem is space related and we gave a suitable online algorithm. 
The final piece~--~an offline approximation algorithm~--~is given by the
algorithm of Bansal \emph{et.~al.}~\cite{bansal2016vector}~which is a $ \ln(d+1) +
0{.}807  + \epsilon$-approximation. 
We can thus use \autoref{thm:main} with $\gamma=\ln(d+1) + 0{.}807 +
\epsilon$ and $\beta=2d$ to conclude the following theorem.

\begin{theorem}
  For every $d\geq 1$, there is a robust online algorithm for the dynamic
  $d$-dimensional Vector Packing problem that is $\ln(d+1) + 0{.}807  +
  \epsilon$-competitive and has amortized migration factor $O(1/\epsilon)$.
\end{theorem}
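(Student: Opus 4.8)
The plan is to instantiate the general framework of \autoref{thm:main}, so the entire argument reduces to checking its three hypotheses for $d$-dimensional Vector Packing and then choosing the accuracy parameter appropriately. First I would record that the problem satisfies \autoref{assumption}: deleting vectors from a feasible packing leaves every bin feasible, since the componentwise sums only decrease, and removing vectors cannot increase the number of bins needed, so $\OPT(I')\le\OPT(I)$ for $I'\subseteq I$. Second, space-relatedness is exactly the Remark preceding the theorem: each bin has capacity $1$ in every coordinate, hence volume at most $1$, so $\OPT(I)\ge\Vol(I)$. Third, \autoref{thm:vector:flexible} already supplies a flexible online algorithm $\alg{A}_{VP}$ of ratio $\beta=2d$; here I would only re-verify the ``flexible with ratio $\beta$'' inequality, which holds because on input $I_{t'}$ together with a previous packing $S$ of $I_t$ the algorithm freezes the bins of $S$ and runs next-fit on the newly arrived vectors, so the additional cost is exactly the number of freshly opened bins, bounded by $1+2d\,[\Vol(I_{t'})-\Vol(I_t)]$ via the two-adjacent-bins argument in that proof.

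For the offline ingredient I would take the approximation algorithm of Bansal \emph{et al.}~\cite{bansal2016vector}, which for every $\delta>0$ produces a packing using at most $(\ln(d+1)+0{.}807+\delta)\OPT(I)+c_\delta$ bins; this is a $(\gamma+\delta)$-approximation with $\gamma=\ln(d+1)+0{.}807$, which is a constant for fixed $d$ and hence lies in $O(1)$ as \autoref{thm:main} requires. I would then apply \autoref{thm:main} with this $\gamma$, with $\beta=2d$, and with the framework's accuracy parameter set to a value $\delta$ of order $\epsilon/d$ (more precisely $\delta=\Theta\!\bigl(\epsilon/(d(\ln(d+1)+2))\bigr)$, clipped to $\delta\le 1/2$); since the theorem yields competitiveness $\gamma+O(1)\,\beta\,\delta$ and amortized migration factor $O(1/\delta)$, this gives competitive ratio at most $\ln(d+1)+0{.}807+\epsilon$ and migration factor $O(d/\epsilon)=O(1/\epsilon)$ for fixed $d$. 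Departures require no extra work: the online phases never delete vectors, and by \autoref{assumption} deletions only lower the offline optimum, so \autoref{thm:main} absorbs them automatically.

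The only point where I would slow down is the bookkeeping of the two error terms: the offline guarantee already carries its own $\delta$, and the framework inflates $\delta$ by the factor $O(1)\beta$ with the hidden constant depending (through $\gamma$) on $d$. One must therefore pick $\delta$ small enough that both the additive competitive error \emph{and} the resulting migration factor stay within the claimed bounds simultaneously; because $d$ is fixed this is purely a renaming of constants, with no combinatorial obstacle. Hence the whole proof is: verify \autoref{assumption} and space-relatedness, quote \autoref{thm:vector:flexible} for $\beta=2d$ and \cite{bansal2016vector} for $\gamma=\ln(d+1)+0{.}807$, and substitute into \autoref{thm:main} with $\delta$ rescaled as above.
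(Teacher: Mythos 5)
Your proposal is correct and follows exactly the paper's route: establish space-relatedness, invoke \autoref{thm:vector:flexible} for the flexible online algorithm with $\beta=2d$, take the offline $(\ln(d+1)+0.807+\epsilon)$-approximation of Bansal et al., and plug everything into \autoref{thm:main}. Your extra care in rescaling the accuracy parameter to absorb the $O(1)\beta$ factor (so that the final ratio is $\gamma+\epsilon$ rather than $\gamma+O(1)\beta\epsilon$) is a detail the paper leaves implicit, but it is the same argument.
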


%%
%% Bibliography
%%

%% Please use bibtex, 

\bibliography{lib}

\begin{thebibliography}{10}

\bibitem{DBLP:journals/siamcomp/BakerS83}
Brenda~S. Baker and Jerald~S. Schwarz.
\newblock Shelf algorithms for two-dimensional packing problems.
\newblock {\em {SIAM} J. Comput.}, 12(3):508--525, 1983.

\bibitem{bansal2006binpacking}
Nikhil Bansal, Jos{\'{e}}~R. Correa, Claire Kenyon, and Maxim Sviridenko.
\newblock Bin packing in multiple dimensions: Inapproximability results and
  approximation schemes.
\newblock {\em Math. Oper. Res.}, 31(1):31--49, 2006.

\bibitem{bansal2016vector}
Nikhil Bansal, Marek Eli{\'{a}}s, and Arindam Khan.
\newblock Improved approximation for vector bin packing.
\newblock In {\em Proc. {SODA}}, pages 1561--1579, 2016.

\bibitem{bansal2014improved}
Nikhil Bansal and Arindam Khan.
\newblock Improved approximation algorithm for two-dimensional bin packing.
\newblock In {\em Proc. {SODA}}, pages 13--25, 2014.

\bibitem{DBLP:conf/approx/BerndtJK15}
Sebastian Berndt, Klaus Jansen, and Kim{-}Manuel Klein.
\newblock Fully dynamic bin packing revisited.
\newblock In {\em Proc. {APPROX-RANDOM}}, pages 135--151, 2015.

\bibitem{heydrich2016hypercube}
David Blitz, Sandy Heydrich, Rob van Stee, André van Vliet, and Gerhard~J.
  Woeginger.
\newblock Improved lower bounds for online hypercube and rectangle packing.
\newblock {\em CoRR}, abs/1607.01229, 2016.

\bibitem{caprara2002harmonic}
Alberto Caprara.
\newblock Packing 2-dimensional bins in harmony.
\newblock In {\em {FOCS}}, pages 490--499. {IEEE} Computer Society, 2002.

\bibitem{christensen2017survey}
Henrik~I. Christensen, Arindam Khan, Sebastian Pokutta, and Prasad Tetali.
\newblock Approximation and online algorithms for multidimensional bin packing:
  {A} survey.
\newblock {\em Computer Science Review}, 24:63--79, 2017.

\bibitem{coppersmith1989multidimensional}
Don Coppersmith and Prabhakar Raghavan.
\newblock Multidimensional on-line bin packing: algorithms and worst-case
  analysis.
\newblock {\em Operations Research Letters}, 8(1):17--20, 1989.

\bibitem{DBLP:journals/ipl/CsirikW97}
J{\'{a}}nos Csirik and Gerhard~J. Woeginger.
\newblock Shelf algorithms for on-line strip packing.
\newblock {\em Inf. Process. Lett.}, 63(4):171--175, 1997.

\bibitem{DBLP:journals/mp/EpsteinL09}
Leah Epstein and Asaf Levin.
\newblock A robust {APTAS} for the classical bin packing problem.
\newblock {\em Math. Program.}, 119(1):33--49, 2009.

\bibitem{epstein2013cubepacking}
Leah Epstein and Asaf Levin.
\newblock Robust approximation schemes for cube packing.
\newblock {\em {SIAM} J. Optim.}, 23(2):1310--1343, 2013.

\bibitem{DBLP:journals/algorithmica/EpsteinL14}
Leah Epstein and Asaf Levin.
\newblock Robust algorithms for preemptive scheduling.
\newblock {\em Algorithmica}, 69(1):26--57, 2014.

\bibitem{DBLP:conf/icalp/FeldkordFGGKRW18}
Bj{\"{o}}rn Feldkord, Matthias Feldotto, Anupam Gupta, Guru Guruganesh, Amit
  Kumar, S{\"{o}}ren Riechers, and David Wajc.
\newblock Fully-dynamic bin packing with little repacking.
\newblock In {\em Proc. {ICALP}}, pages 51:1--51:24, 2018.

\bibitem{DBLP:conf/esa/GalvezSV18}
Waldo G{\'{a}}lvez, Jos{\'{e}}~A. Soto, and Jos{\'{e}} Verschae.
\newblock Symmetry exploitation for online machine covering with bounded
  migration.
\newblock In {\em Proc. {ESA}}, pages 32:1--32:14, 2018.

\bibitem{harren2009hypercube}
Rolf Harren.
\newblock Approximation algorithms for orthogonal packing problems for
  hypercubes.
\newblock {\em Theor. Comput. Sci.}, 410(44):4504--4532, 2009.

\bibitem{DBLP:conf/icalp/JansenK13}
Klaus Jansen and Kim{-}Manuel Klein.
\newblock A robust {AFPTAS} for online bin packing with polynomial migration,.
\newblock In {\em Proc. {ICALP}}, pages 589--600, 2013.

\bibitem{jansen2017strip}
Klaus Jansen, Kim{-}Manuel Klein, Maria Kosche, and Leon Ladewig.
\newblock Online strip packing with polynomial migration.
\newblock In {\em Proc. {APPROX-RANDOM}}, pages 13:1--13:18, 2017.

\bibitem{jansen2005strip}
Klaus Jansen and Rob van Stee.
\newblock On strip packing with rotations.
\newblock In {\em Proc. {STOC}}, pages 755--761, 2005.

\bibitem{karmarkar1982efficient}
Narendra Karmarkar and Richard~M Karp.
\newblock An efficient approximation scheme for the one-dimensional bin-packing
  problem.
\newblock In {\em Proc. {FOCS}}, pages 312--320, 1982.

\bibitem{kenyon2000strip}
Claire Kenyon and Eric R{\'{e}}mila.
\newblock A near-optimal solution to a two-dimensional cutting stock problem.
\newblock {\em Math. Oper. Res.}, 25(4):645--656, 2000.

\bibitem{DBLP:journals/mor/SandersSS09}
Peter Sanders, Naveen Sivadasan, and Martin Skutella.
\newblock Online scheduling with bounded migration.
\newblock {\em Math. Oper. Res.}, 34(2):481--498, 2009.

\bibitem{DBLP:journals/mor/SkutellaV16}
Martin Skutella and Jos{\'{e}} Verschae.
\newblock Robust polynomial-time approximation schemes for parallel machine
  scheduling with job arrivals and departures.
\newblock {\em Math. Oper. Res.}, 41(3):991--1021, 2016.

\bibitem{van1995lower}
Andr{\'e} Van~Vliet.
\newblock {\em Lower and Upper Bounds for On-line Bin Packing and Scheduling
  Heuristics}.
\newblock PhD Thesis, 1995.

\bibitem{yao1980new}
Andrew Chi-Chih Yao.
\newblock New algorithms for bin packing.
\newblock {\em J. {ACM}}, 27(2):207--227, 1980.

\end{thebibliography}

\end{document}